\newtheorem{theorem}{Theorem}
\newtheorem{lemma}[theorem]{Lemma}
\newtheorem{corollary}[theorem]{Corollary}
\newtheorem{definition}{Definition}
\newtheorem*{remark}{Remark}
\newtheoremstyle{noparens}
  {}{}
  {\itshape}{}
  {\bfseries}{.}
  { }
  {\thmname{#1}\thmnumber{ #2}\mdseries\thmnote{ #3}}
\theoremstyle{noparens}
\newtheorem{lemmaNoParens}[theorem]{Lemma}
\newtheorem{propositionNoParens}[theorem]{Proposition}
\newtheorem{definitionNoParens}[definition]{Definition}
\begin{document}

\title{On the Message Passing Efficiency of Polar and Low-Density Parity-Check Decoders}

%A Study of Message Passing Decoding Efficiency for Polar and Low-density Parity-check Codes
% author names and affiliations
% transmag papers use the long conference author name format.

\author{\IEEEauthorblockN{Dawei Yin,
Yuan Li,
Xianbin Wang,
Jiajie Tong,
Huazi Zhang,
Jun Wang,\\
Guanghui Wang,
Jun Chen,
Guiying Yan,
Zhiming Ma,
and Wen Tong}
\thanks{This work is partially supported by the National Natural Science Foundation of China (12231018), and presented in part at the the 2022 IEEE Globecom: Workshop on Channel Coding beyond 5G.

Dawei Yin is with the School of Mathematics, Shandong University, Jinan, 250100 China, and also with the Huawei Technologies Co. Ltd., Hangzhou, 310051 China (e-mail: daweiyin@mail.sdu.edu.cn).

Yuan Li is with the Academy of Mathematics and Systems Science, CAS, University of Chinese Academy of Sciences, Beijing, 100190 China, and also with the Huawei Technologies Co. Ltd., Hangzhou, 310051 China (e-mail: liyuan181@mails.ucas.ac.cn).

Xianbin Wang, Jiajie Tong, Huazi Zhang, and Jun Wang are with the Huawei Technologies Co. Ltd., Hangzhou, 310051 China (e-mail: \{wangxianbin1, tongjiajie, zhanghuazi, justin.wangjun\}@huawei.com).

Guanghui Wang is with the School of Mathematics, Shandong University, Jinan, 250100 China (e-mail: ghwang@sdu.edu.cn).

Wen Tong is with the Huawei Technologies Canada Co. Ltd., Ottawa, ON L3R 5A4, Canada (e-mail:tongwen@huawei.com).

Jun Chen is with the Department of Electrical and Computer Engineering, McMaster University, Hamilton, ON L8S 4K1, Canada (e-mail: chenjun@mcmaster.ca).

Guiying Yan and Zhiming Ma are with the Academy of Mathematics and Systems Science, CAS, University of Chinese Academy of Sciences, Beijing, 100190 China (e-mail: yangy@amss.ac.cn; mazm@amt.ac.cn).

 %is with the Academy of Mathematics and Systems Science, CAS, University of Chinese Academy of Sciences, Beijing, 100190 China (e-mail: mazm@amt.ac.cn).
}
}
%Corresponding author: M. Shell (email: http://www.michaelshell.org/contact.html).
% The paper headers
%\markboth{Journal of \LaTeX\ Class Files,~Vol.~14, No.~8, August~2015}%
%{Shell \MakeLowercase{\textit{et al.}}: Bare Demo of IEEEtran.cls for IEEE Transactions on Magnetics Journals}

\maketitle

%\IEEEtitleabstractindextext{%
\begin{abstract}
This study focuses on the efficiency of message-passing-based decoding algorithms for polar and low-density parity-check (LDPC) codes. Both successive cancellation (SC) and belief propagation (BP) decoding algorithms are studied {in} the message-passing framework. {Counter-intuitively, SC decoding demonstrates the highest decoding efficiency, although it was considered a weak decoder {in terms of} error-correction performance.} We analyze the complexity-performance tradeoff to dynamically track the decoding efficiency, where the complexity is measured by the number of messages passed (NMP), and the performance is measured by the statistical distance to the maximum \emph{a posteriori} (MAP) estimate. This study offers a new insight into the contribution of each message passed in decoding, and compares various decoding algorithms on a message-by-message level. The analysis corroborates recent results on terabits-per-second polar SC decoders, and might shed light on better scheduling strategies.

 %This study offers novel insight into the internal decoding process and reveals the efficiency advantage of the SC algorithm, which is a weak decoder in the sense of performance.
 %It allows us to discover the efficiency advantage of SC's weak decoders   and offers novel insight into the internal decoding process. The SC algorithm was widely considered a weak decoder in academia but we find its advantage in efficiency.
 %It is a new paradigm which can facilitate comparison between several well-known decoding algorithms without resorting to specific circuit implementation.
\end{abstract}

% Note that keywords are not normally used for peerreview papers.
\begin{IEEEkeywords}
decoding efficiency, probability codes, belief propagation, successive cancellation.
\end{IEEEkeywords}
%}

\IEEEdisplaynontitleabstractindextext
\IEEEpeerreviewmaketitle
\section{Introduction}
\IEEEPARstart{A}{FTER} over six decades of research efforts, probabilistic coding and polar coding can approach the theoretical Shannon limit under additive white Gaussian noise (AWGN) channel, both in the asymptotic and finite-length regimes \cite{4595172,910578,6589171,polyanskiy2010channel}. Shannon's paradigm focuses on two dimensions:  \emph{probability of decoding error} versus  \emph{noise level}. The former is measured by bit error rate (BER) or block error rate (BLER), and the latter is measured by signal-to-noise ratio (SNR). The ``SNR-BLER'' curve and its gap to channel capacity have become the primary criterion for comparing different coding and decoding schemes. This paradigm has guided coding theory and practice toward a tremendous success. In practice, there are other dimensions of interest, such as complexity, latency, and energy efficiency. In the 1960s, Berlekamp noted that \emph{``from a practical standpoint, the essential limitation of all coding and decoding schemes proposed to date has not been Shannon's capacity but the complexity of the decoder''} \cite{1968Algebraic}. Although Berlekamp refered to algebraic codes, his observation holds to this day. Complex decoding algorithms can hardly be adopted for implementation even if they approach channel capacity. This is particularly true as  Moore's law has almost reached the physical limits \cite{Morre_law}.

%As more sophisticated decoding algorithms are introduced to applications targeting finite-length performance bounds \cite{DBLP:journals/corr/abs-1908-09594}, or terabit-per-second throughput \cite{2021Toward}, \emph{decoding efficiency} arises as the main consideration.

\begin{figure}[htbp]
    \centering
    \includegraphics[width=0.7\textwidth]{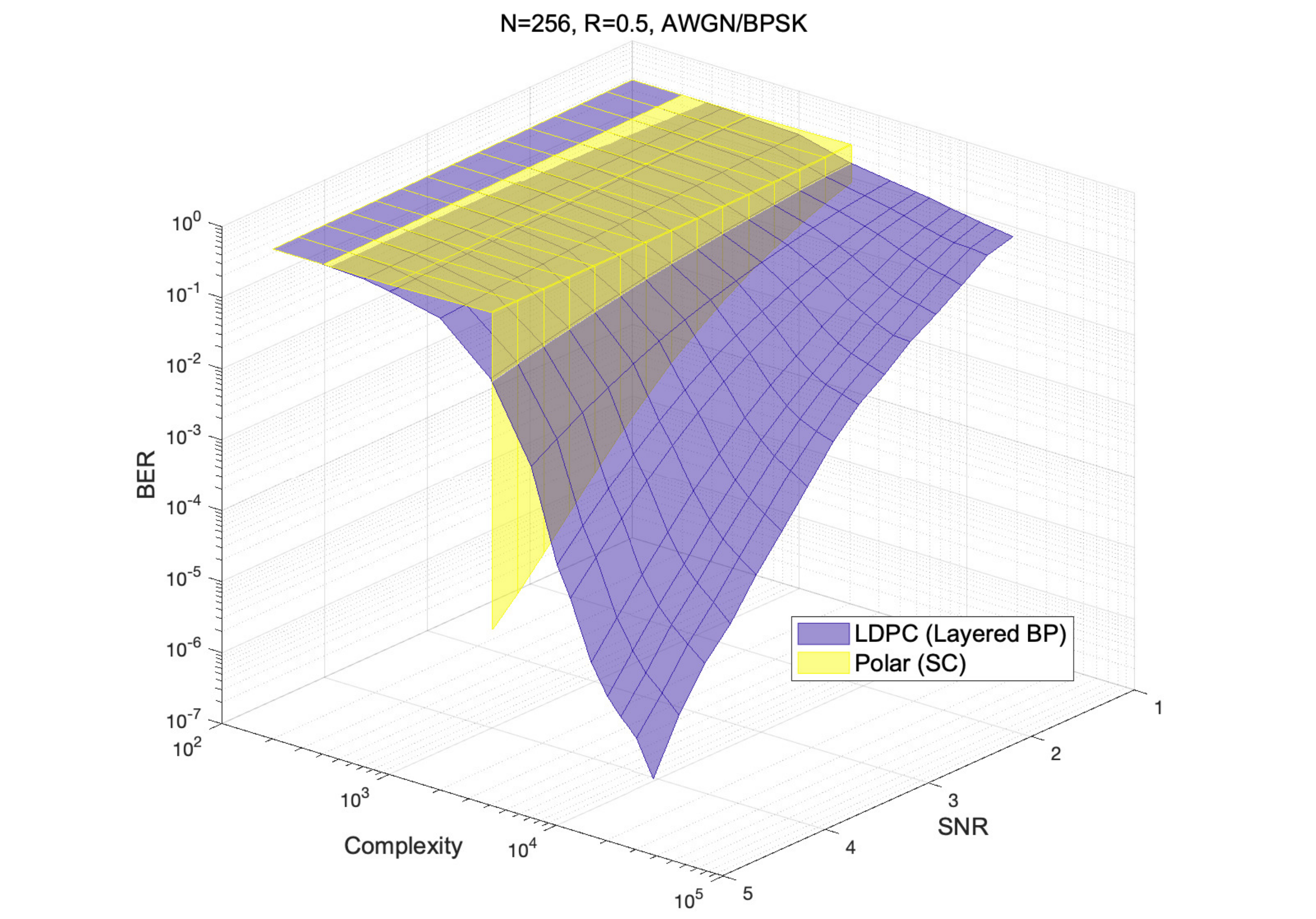}
    \caption{A 3-dimensional comparison of polar and LDPC codes (constructions follow 5G New Radio).}
    \label{fig.3d}
\end{figure}

Recently, terabits-per-second (Tbps) high throughput decoders were reported from both academia and industry \cite{8880815,sural2020tb,arxiv.2107.08600}. Polar SC decoders can achieve a record-breaking throughput of 1.5 Tbps \cite{sural2020tb} and 4 Tbps \cite{arxiv.2107.08600} within only one square millimeter chip area. These results prove the high decoding efficiency of polar SC decoders, although they were considered weak from the error-correction performance perspective \cite{5205860, 5592698,7055304, 6297420}. High-efficiency decoders are instrumental in delivering Tbps peak throughput services for 6G. It prompts us to explore the theoretical analysis of decoding efficiency, in addition to the well-researched decoding performance. In particular, we need to define a metric to analyze decoding efficiency quantitatively. {As depicted in Fig. \ref{fig.3d}, when complexity is included as a new dimension (the measure of complexity will be provided shortly), it provides a new perspective on coding schemes. In the framework of SNR-BER, it is not evident that there is a tenfold difference in complexity behind the performance.}

Decoding efficiency, characterized by a complexity-performance tradeoff, varies across different decoding algorithms. In the research of sequential decoding, the cutoff rate has become an essential index for describing the complexity-performance tradeoff \cite{1053909}. The average per-bit complexity is a finite constant when {operating} below the cutoff rate and grows exponentially if the code rate is above the cutoff rate \cite{ITRC1968}. Massey suggested that, as a rule of thumb, the cutoff rate is the practical upper limit on code rates for reliable communications, whereas capacity is the theoretical upper limit \cite{7446999}. In message-passing decoding, McEliece conjectured that for a large class of channels, if the designed rate of code ensemble equals a fraction $1-\epsilon$ of the channel capacity, then the decoding complexity scales like $\frac{1}{\epsilon}ln\frac{1}{\epsilon}$ \cite{10539999}. The decoding efficiency in circuits, that is, the energy efficiency has been studied in the very large scale integration (VLSI) model \cite{6940280,6284015,7083760, 8438511}. Blake and Kschischang established the asymptotic energy bound of LDPC and polar decoders in \cite{8719012, 7862885}. Additionally, Grover examined the optimization problem of the total energy consumption for LDPC codes, which includes both transmission and decoding energy \cite{4595173,6483282}. All these works highlighted the importance of taking decoding complexity into consideration, in addition to performance.
 %In order to quantitatively evaluate and compare the ``decoding efficiency'' of different message-passing decoding algorithms, a good metric is needed.

%But how do we evaluate and compare the ``decoding efficiency'' of different decoding algorithms? The main challenge is the lack of a good metric. The term ``efficiency'' can be interpreted in various ways, such as energy consumption, latency, computational or implementation complexity, and so on. The complexity analysis of a decoding algorithm may not reflect its practical implementation. The energy efficiency models also depend on different chip architectures, making it challenging to compare different algorithms fairly.

This study aims at measuring the decoding efficiency quantitatively such that various message-passing algorithms can be compared. We focus on two well-known codes, that is, polar codes and low-density parity-check (LDPC) codes \cite{1057683}. Specifically, we focus on successive cancellation (SC) decoding for polar codes and belief propagation (BP) decoding for both codes. In the general sense, both SC and BP are message-passing algorithms on a factor graph. To characterize the complexity-performance tradeoff, we adopt a two-dimensional paradigm, where the complexity is measured by the number of messages passed (NMP), and the performance is measured by a metric based on the Bethe free energy (to be described shortly). {The fine-grained definition of NMP allows to measure BP decoding complexity on a message-by-message level, as opposed to a coarse-grained complexity analysis on the iteration level \cite{5407615}.} The NMP can also measure the complexity of SC decoding for polar codes, as SC is BP with special scheduling \cite{7154407}. Measuring decoding performance by BER/BLER is a common practice. However, it offers little insight into the internal decoding process, thus does not serve our purpose. Alternatively, we propose to look into the convergence of decoding process by tracking \emph{``how far away is an intermediate decoding state from the maximum \emph{a posteriori} (MAP) decoding result''}, coined as ``Gap to maximum A Posteriori (GAP)''. To this end, we derive GAP based on Bethe free energy to measure the statistical distance between the distribution obtained by message passing and the true posterior distribution. It offers two distinctive advantages. First, it zooms in on each individual edge's log-likelihood ratio (LLR) update in the decoding process, thus providing a higher resolution than traditional extrinsic information transfer (EXIT) chart methods for analyzing iterative decoders. Second, the ``NMP-GAP'' curve can be efficiently obtained by Monte Carlo simulations or density evolution (DE) \cite{910578}. This paradigm allows us to quantitatively study the decoding efficiency of message-passing algorithms and may inspire new approaches for improving efficiency.

%for LDPC codes that is

The paper is organized as follows: Section II introduces some basic concepts and several decoding algorithms. In Section III, we illustrate the rationality of adopting the ``NMP-GAP'' curve to study decoding efficiency. In Section IV, we discuss the advantages of the ``NMP-GAP'' paradigm and apply it to compare the efficiency of several algorithms. {In Section V, we apply this model to the scheduling policy of LDPC codes.} In Section VI, we discuss limitations and weaknesses in this model. We conclude in Section VII.

\section{PRELIMINARIES}
In this section, we review the BP and SC algorithms from the perspective of decoding efficiency.

\subsection{LDPC codes and the BP decoding}
{A binary LDPC code with a parity check matrix $H$ can be represented by a factor graph $G=$ $(\mathcal{V} \cup \mathcal{C}, \mathcal{E})$\cite{1056404}, where $\mathcal{V}$ and $\mathcal{C}$ are the disjoint sets of variable nodes (VNs) and check nodes (CNs), respectively, and $\mathcal{E}$ is the set of edges connecting the VNs and CNs.}

BP decoding proceeds by successively passing messages on the factor graph between VNs (denoted by $i$) and CNs (denoted by $\alpha$). Examples of LDPC and polar factor graphs are shown in Fig.~\ref{fig.factor}.

\begin{figure}[htbp]
    \centering
    \includegraphics[width=0.6\textwidth]{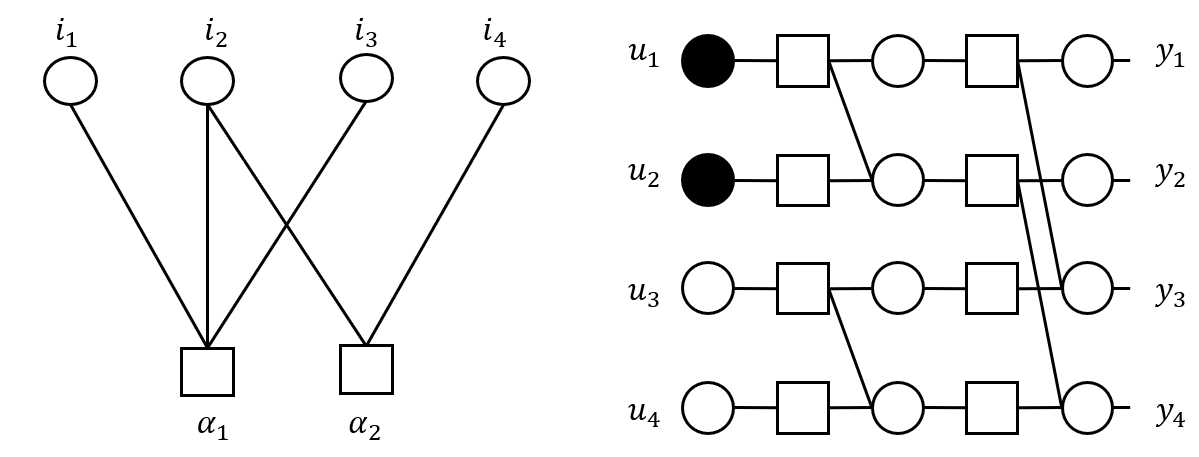}
    \caption{(4,2) LDPC and Polar codes. Let circles denote variable nodes, boxes denote check nodes, and solid circles be the frozen bits of polar code.}
    \label{fig.factor}
\end{figure}

For the sum-product algorithm \cite{910572}, the update rule for a variable-to-check (V2C) LLR message is
\begin{equation}
\label{equ:equ1}
m_{i \rightarrow \alpha}^{(l)}=m_{0}+\sum_{h \in N(i)\backslash \alpha} m_{h \rightarrow i}^{(l-1)},
\end{equation}
and the update rule at a check-to-variable (C2V) edge is
\begin{equation}
\label{equ:equ2}
    \begin{split}
m_{\alpha \rightarrow i}^{(l)}=2 \tanh ^{-1}\left(\prod_{j \in N(\alpha)\backslash i} \tanh \left(m_{j \rightarrow \alpha}^{(l-1)} / 2\right)\right),
    \end{split}
\end{equation}
where $m_{0}$ is the channel message in LLR form, $l$ is the number of iterations, $N(v)$ represents the nodes connected directly to node $v$, {$i \rightarrow \alpha$ means from variable node $i$ to check node $\alpha$ and $\alpha \rightarrow i$ means from check node $\alpha$ to variable node $i$. The initial messages $m_{i \rightarrow \alpha}^{(0)}$ and $m_{\alpha \rightarrow i}^{(0)}$ are $0$.}

The min-sum approximation in \cite{1495850} can be employed to reduce the complexity of (\ref{equ:equ2}):
\begin{equation}
\label{equ:equ3}
    \begin{split}
m_{\alpha \rightarrow i}^{(l)} \approx \prod_{j \in N(\alpha)\backslash i} \operatorname{sign}\left(m_{j \rightarrow \alpha}^{(l-1)} \right) \cdot \min _{j \in N(\alpha)\backslash i} \left(\left| m_{j \rightarrow \alpha}^{(l-1)}\right| \right).
    \end{split}
\end{equation}

\subsection{Polar codes and the SC decoding}
%Using the notation proposed by Ar{\i}kan \cite{b1}, we describe two operations in each substructure of the SC decoder.
{Polar codes {are parameterized} by $\left(n, k, \mathcal{A}\right)$, where $n$ is the code length and $\mathcal{A}$ is a set of information indices that carry $k$ information bits $\mathbf{u}_{\mathcal{A}}$. The complement of $\mathcal{A}$ is the frozen indices $\mathcal{A}^{c}$ that carry frozen bits $\mathbf{u}_{\mathcal{A}^{c}}$. The polar generator matrix is $\mathbf{G}_{n}=\mathbf{B}_{n} \mathbf{F}^{\otimes s}$ for any $n=2^{s}$, where $\mathbf{B}_{n}$ is a bit-reversal permutation matrix, $\mathbf{F}^{\otimes s}$ denotes the $s-$th Kronecker power of $\mathbf{F} \triangleq\left[\begin{array}{c}1~0 \\ 1~1\end{array}\right]$. A codeword is generated by $\mathbf{c}=\mathbf{u} {G}_{n}$, where $\mathbf{u}=\left(\mathbf{u}_{\mathcal{A}}, \mathbf{u}_{\mathcal{A}^{c}}\right)$.

SC decoder for polar code estimates $u_{i}$ for given frozen bits $\mathbf{u}_{\mathcal{A}^{c}}$, received word $\mathbf{y}$ and the estimates $\widehat{\mathbf{u}}_{1}^{i-1}$ of $\mathbf{u}_{1}^{i-1}$ \cite{4595172}. It can be implemented by computing LLR $L_{n}^{(i)}\left(\mathbf{y}_{1}^{n}, \widehat{\mathbf{u}}_{1}^{i-1}\right) \triangleq \ln \frac{W_{n}^{(i)}\left(\mathbf{y}_{1}^{n} \widehat{\mathbf{u}}_{1}^{i-1} \mid 0\right)}{W_{n}^{(i)}\left(\mathbf{y}_{1}^{n}, \widehat{\mathbf{u}}_{1}^{i-1} \mid 1\right)}(1 \leq i \leq n)$ according to the recursive formula:
\begin{equation}
\label{equ:equ4}
L_{n}^{(2 j-1)}\left(\mathbf{y}_{1}^{n}, \widehat{\mathbf{u}}_{1}^{2 j-2}\right) \\
=F\left(L_{n / 2}^{(j)}\left(\mathbf{y}_{1}^{n / 2}, \widehat{\mathbf{u}}_{1, o}^{2 j-2} \oplus \widehat{\mathbf{u}}_{1, e}^{2 j-2}\right), L_{n / 2}^{(j)}\left(\mathbf{y}_{n / 2+1}^{n}, \widehat{\mathbf{u}}_{1, e}^{2 j-2}\right)\right),
\end{equation}

and

\begin{equation}
\label{equ:equ5}
L_{n}^{(2 j)}\left(\mathbf{y}_{1}^{n}, \widehat{\mathbf{u}}_{1}^{2 j-1}\right) \\
=G\left(L_{n / 2}^{(j)}\left(\mathbf{y}_{1}^{n / 2}, \widehat{\mathbf{u}}_{1, o}^{2 j-2} \oplus \widehat{\mathbf{u}}_{1, e}^{2 j-2}\right), L_{n / 2}^{(j)}\left(\mathbf{y}_{n / 2+1}^{n}, \widehat{\mathbf{u}}_{1, e}^{2 j-2}\right), \widehat{u}_{2 j-1}\right)
\end{equation}
for $1 \leq j \leq n / 2$, where
$$F(x, y)=2 \tanh ^{-1}(\tanh (x / 2) \tanh (y / 2)),
$$

$$
G(x, y, u) \triangleq(-1)^{u} x + y,
$$
$\widehat{\mathbf{u}}_{1, o}^{i}$ and $\widehat{\mathbf{u}}_{1, e}^{i}$ are subvectors of $\widehat{\mathbf{u}}_{1}^{i}$ with odd and even indices respectively. $L_{1}^{(1)}\left(r_{i}\right) \triangleq \ln \frac{W\left(r_{i} \mid 0\right)}{W\left(r_{i} \mid 1\right)}$ is the initial channel metric and $\hat{u}_{2j-1}$ is the modulo-2 partial sum of previously decoded bits.
%The term $\hat{u}_{2j-1}$ indicates that the decision on the current bit strongly depends on the estimations of its preceding bits.
}

Similar to BP, the min-sum approximation can also be employed to reduce the complexity of (\ref{equ:equ4}):
\begin{equation}
\label{equ:equ6}
    \begin{split}
F(x, y) \approx \operatorname{sign}(x) \operatorname{sign}(y) \min (|x|,|y|).
    \end{split}
\end{equation}

%By examining the relationship between (\ref{equ:equ1}) and (\ref{equ:equ5}), (\ref{equ:equ3}) and (\ref{equ:equ6}), we have basically formed the rudiment of the concept of the number of message passing. We'll unify them completely in %the next section.

\iffalse
We rewrite two operations in SC decoder proposed by Ar{\i}kan in \cite{4595172} into the LLR values operations:
\begin{equation}
\label{equ:equ4}
    \begin{split}
\begin{array}{c}
F(a, b)=2 \tanh ^{-1}(\tanh (a / 2) \tanh (b / 2)), \\
\end{array}
    \end{split}
\end{equation}
and
\begin{equation}
\label{equ:equ5}
    \begin{split}
\begin{array}{c}
G\left(a, b, \hat{u}_{sum}\right)=a(-1)^{\hat{u}_{sum}}+b. \\
\end{array}
    \end{split}
\end{equation}
\fi

\subsection{Density Evolution}
The BP process can be analyzed by density evolution. Assume that $\bm{x}=\{x_1,\cdots,x_n\} \in \{0,1\}^n$ are transmitted through binary memoryless symmetric (BMS) channels $\bm{W}=\{W_1,\cdots,W_n\}$, and $\bm{y}=\{y_1,\cdots,y_n\}$ are received signals. Let
$$L(y_i)=\ln\frac{p(x_i=0|y_i)}{p(x_i=1|y_i)}$$
denote the corresponding LLR of $y_i$, and $c_i$ be the density of $L(y_i)$ conditioned on $x_i = 0$. We call $c_i$ the $L$-density of $y_i$. Then $c_i \in \mathcal{X}$ is a symmetric probability measure, where $\mathcal{X}$ is a convex subset of symmetric probability measures \cite{2008Modern}. The $L$-densities of $m_{\alpha \rightarrow i}$ and $m_{i \rightarrow \alpha}$ are denoted by $c_{(\alpha, i)}$ and $c_{(i, \alpha)}$, respectively. The entropy function of $L$-density is the linear functional defined by
$$H(c)\triangleq\int_{-\infty}^{+\infty}c(y) {log_2}\left(1+e^{-y}\right)dy.$$
%=\ln\frac{W_i(y_i|0)}{W_i(y_i|1)}
%specifically, $L$-density of code bits are $c$, which is the density of BMS channel $W$,  $L$-density of frozen bits are $\Delta_{\infty}$ and $L$-density of other intermediate bits are $\Delta_0$ \cite{b28}.
% $m_0$ which is normally distributed with mean value $u_0=\frac{2}{\sigma^2}$ and variance $2u_0$, where $\sigma^2$ is the variance of AWGN channel.

The convolution operations on the variable node and check node are denoted by two binary operators $\circledast$ and $\boxast$, respectively.
{
For $L$-density $c_1$, $c_2$, and any Borel set $E \subset \overline{\mathbb{R}}$, define
$$
\left(\mathrm{c}_{1} \circledast \mathrm{c}_{2}\right)(E) \triangleq \int_{\overline{\mathbb{R}}} \mathrm{c}_{1}(E-\alpha) \mathrm{c}_{2}(d \alpha), \\
$$

$$
\left(\mathrm{c}_{1} \boxast \mathrm{c}_{2}\right)(E) \triangleq \int_{\overline{\mathbb{R}}} \mathrm{c}_{1}\left(2 \tanh ^{-1}\left(\frac{\tanh \left(\frac{E}{2}\right)}{\tanh \left(\frac{\alpha}{2}\right)}\right)\right) \mathrm{c}_{2}(d \alpha),
$$
where $\int_{\overline{\mathbb{R}}} f(\alpha) c(d \alpha)$ is the Lebesgue integral with respect to probability measure $c$ on extended real numbers $\overline{\mathbb{R}}$.

}

If the factor graph is cycle-free, then the update rule in (1), (2) can be written in the form of $L$-density
$$c_{(i, \alpha)}^{(l)}=c_i \circledast \left(\circledast_{h \in N(i)\backslash \alpha} \ c_{(h, i)}^{(l-1)}\right),$$
$$c_{(\alpha, i)}^{(l)}=\boxast_{j \in N(\alpha)\backslash i} \ c_{(j, \alpha)}^{(l-1)}.$$

{In order to analyze how the messages evolve in the decoding process, we introduce the following definitions and properties about $L$-density in \cite{6912949}.}

\begin{definitionNoParens}
{
For $L$-density $c$  and $f:[0,1] \rightarrow \mathbb{R}$, define
$$
I_{f}(c) \triangleq \int_{\overline{\mathbb{R}}} f\left(\left|\tanh \left(\frac{\alpha}{2}\right)\right|\right) c(d \alpha).
$$
We call $c_{1}$ is degraded with respect to $c_{2}$ (denoted by $c_{1} \succeq c_{2}$), if $I_{f}\left(\mathrm{c}_{1}\right) \geq I_{f}\left(\mathrm{c}_{2}\right)$ for all concave non-increasing $f$. Furthermore, $c_{1}$ is said to be strictly degraded with respect to $c_{2}$ (denoted by $ c_{1} \succ c_{2}$) if $c_{1} \succeq c_{2}$ and $c_{1} \neq c_{2}$.}
\end{definitionNoParens}

Degradation defines a partial order on the space of symmetric probability measures, with the greatest element $\Delta_{0}$ and the least element $\Delta_{\infty}$. Thus we have
$$
c \succ \Delta_{\infty} \text { if } c \neq \Delta_{\infty}, \text { and } c \prec \Delta_{0} \text { if } c \neq \Delta_{0}.
$$

The following proposition will be used in subsequent discussions.
\begin{propositionNoParens}
Suppose $\mathrm{x}_{1}, \mathrm{x}_{2}, \mathrm{x}_{3}, \mathrm{x}_{1}^{\prime}, \mathrm{x}_{2}^{\prime} \in \mathcal{X}$.

i) If $\mathrm{x}_{1} \succeq \mathrm{x}_{2}$, then
$$
\begin{aligned}
\mathrm{x}_{1} \circledast \mathrm{x}_{3} \succeq \mathrm{x}_{2} \circledast \mathrm{x}_{3}, \quad \text { for all } \mathrm{x}_{3} \in \mathcal{X},
\\
\mathrm{x}_{1} \boxast \mathrm{x}_{3} \succeq \mathrm{x}_{2} \boxast \mathrm{x}_{3}, \quad \text { for all } \mathrm{x}_{3} \in \mathcal{X}.
\end{aligned}
$$

ii) If $\mathrm{x}_{1} \succ \mathrm{x}_{2}$, then
$$H\left(\mathrm{x}_{1}\right)>H\left(\mathrm{x}_{2}\right).$$

iii)
$$H(\mathrm{x}_{1} \circledast \mathrm{x}_{2})+H(\mathrm{x}_{1} \boxast \mathrm{x}_{2})=H(\mathrm{x}_{1})+H(\mathrm{x}_{2}). $$

iv)
If $\mathrm{y}_{1}=\mathrm{x}_{1}^{\prime}-\mathrm{x}_{1}, \mathrm{y}_{2}=\mathrm{x}_{2}^{\prime}-\mathrm{x}_{2}$ with $\mathrm{x}_{1}^{\prime} \succeq \mathrm{x}_{1}, \mathrm{x}_{2}^{\prime} \succeq \mathrm{x}_{2}$, then

$$H\left(\mathrm{y}_{1} \boxast \mathrm{y}_{2}\right) \leq 0, \quad H\left(\mathrm{y}_{1} \circledast \mathrm{y}_{2}\right) \geq 0.$$

\end{propositionNoParens}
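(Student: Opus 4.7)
I would dispatch the four claims in order: (ii) rests on the same $I_f$-machinery that drives (i), while (iv) reduces to (iii) by bilinearity.

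For part (i), the plan is to argue directly from the $I_f$-characterization of degradation. For the check-node operation, the substitution $t=|\tanh(\alpha/2)|$, $s=|\tanh(\beta/2)|$ turns $\boxast$ into the pointwise product $ts$ on $[0,1]$, so for each fixed $s$ the reparameterized integrand $t\mapsto f(ts)$ is concave non-increasing in $t$; the inequality $I_f(x_1\boxast x_3)\ge I_f(x_2\boxast x_3)$ then follows by conditioning on the sample drawn from $x_3$ and applying $x_1\succeq x_2$ inside the inner integral. For $\circledast$ I would use the mixture-of-BSC representation of symmetric $L$-densities, under which degradation corresponds to a stochastic order on the BSC-mixing measures and $\circledast$ acts in a way that preserves this order; equivalently, one can appeal to the fact that degradation is equivalent to physical degradability via a stochastic kernel, which clearly commutes with independent LLR convolution.

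For part (ii), I would invoke the representation $H(c)=I_{f^*}(c)$ (up to an affine transformation) with $f^*$ strictly concave and non-increasing: after the change of variables $t=|\tanh(y/2)|$ and using the $L$-density symmetry to fold the integral over $\{y<0\}$ onto $\{y\ge 0\}$, the integrand $\log_2(1+e^{-y})$ is transported to a strictly concave function on $[0,1]$. Strict degradation $x_1\succ x_2$ then yields strict inequality by a standard argument: equality $I_{f^*}(x_1)=I_{f^*}(x_2)$ combined with $I_f(x_1)\ge I_f(x_2)$ for every concave non-increasing $f$ would, via strict concavity of $f^*$, force $x_1=x_2$, contradicting the hypothesis.

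Part (iii) is an information-theoretic identity that I would verify by expressing both entropies in terms of two independent LLR variables $L_i\sim x_i$. With $U_1,U_2$ the polarization inputs and $X_1=U_1\oplus U_2$, $X_2=U_2$ the outputs of the Ar\i kan kernel, one checks that $H(x_1\boxast x_2)=H(U_1\mid Y_1,Y_2)$ and $H(x_1\circledast x_2)=H(U_2\mid Y_1,Y_2,U_1)$, so (iii) collapses to the chain rule $H(U_1,U_2\mid Y_1,Y_2)=H(X_1\mid Y_1)+H(X_2\mid Y_2)$.

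Part (iv) is where the real work lies, and it will be the main obstacle. Extending $\boxast$ and $\circledast$ to finite signed measures by bilinearity, I would expand
\begin{equation*}
(x_1'-x_1)\,\square\,(x_2'-x_2)=x_1'\,\square\,x_2'-x_1'\,\square\,x_2-x_1\,\square\,x_2'+x_1\,\square\,x_2,\qquad \square\in\{\boxast,\circledast\},
\end{equation*}
sum the two corresponding $H$-expressions, and apply part (iii) term-by-term: the eight entropies cancel pairwise to give $H(y_1\boxast y_2)+H(y_1\circledast y_2)=0$. Hence it suffices to establish the single inequality $H(y_1\circledast y_2)\ge 0$; the dual inequality then follows automatically. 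This residual inequality is a supermodularity statement for $H(\,\cdot\,\circledast\,\cdot\,)$ on the degradation lattice, which I would attack by conditioning on the realization drawn from one of the two signed measures and applying part (i)'s monotonicity together with a second-difference argument in the spirit of Mrs.\ Gerber's lemma. Making the signed-measure conditioning rigorous---in particular handling the Hahn decomposition of $y_1,y_2$ and reabsorbing positive and negative parts into honest $L$-densities without destroying the degradation ordering---is the most delicate ingredient and, in my plan, the step that requires the most care.
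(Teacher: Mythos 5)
First, a point of reference: the paper does not prove this proposition at all — it is imported from the cited literature (\cite{6912949} and Chapter 4 of \cite{2008Modern}) — so your proposal is being measured against the standard proofs there rather than against anything in the text. Your arguments for (i)--(iii) are correct and essentially are those standard proofs: the change of variables $t=|\tanh(\alpha/2)|$ making $\boxast$ multiplicative for part (i), the equivalence of the $I_f$-order with physical degradation for the $\circledast$ half, the equality case of the increasing-concave order for the strictness in (ii), and the Ar\i kan chain-rule identity for the duality rule (iii). The reduction you perform in (iv) — extending $H$, $\circledast$, $\boxast$ to signed measures by (bi)linearity, expanding the four cross terms, and applying (iii) termwise to get $H(y_1\boxast y_2)+H(y_1\circledast y_2)=0$ — is also correct and is exactly the right move.

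The genuine gap is what remains of (iv): you reduce to a single inequality but then do not prove it, and you have chosen the harder of the two dual statements to attack. The inequality $H(y_1\circledast y_2)\ge 0$ does not factorize, and the plan you sketch (conditioning on a ``realization drawn from'' a signed measure, Hahn decomposition, a Mrs.~Gerber-type second difference) is not an argument as stated — you concede as much. The step that actually closes the proof is the dual inequality $H(y_1\boxast y_2)\le 0$, which becomes termwise obvious via the moment expansion $H(\mathrm{c})=\mathrm{c}(\overline{\mathbb{R}})-\sum_{k\ge 1}\xi_k M_k(\mathrm{c})$ with $\xi_k>0$ and $M_k(\mathrm{c})=\int\tanh^{2k}(\alpha/2)\,\mathrm{c}(d\alpha)$. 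Since $\tanh(\cdot/2)$ is multiplicative under the check-node combination, bilinearity gives $M_k(y_1\boxast y_2)=M_k(y_1)M_k(y_2)$; and because $-t^{2k}$ is concave non-increasing on $[0,1]$, degradation $x_i'\succeq x_i$ forces $M_k(y_i)=M_k(x_i')-M_k(x_i)\le 0$, so every product $M_k(y_1)M_k(y_2)$ is non-negative. As $y_i(\overline{\mathbb{R}})=0$, this yields $H(y_1\boxast y_2)=-\sum_k\xi_k M_k(y_1)M_k(y_2)\le 0$, and the $\circledast$ inequality then follows from the duality identity you already derived. Until you either substitute this computation or make your supermodularity argument rigorous, part (iv) is not established.
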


More properties about $L$-density and the entropy of symmetric measures can be found in \cite{6912949} and {chapter 4} of \cite{2008Modern}.

\subsection{{Bethe free energy}}

The Bethe free energy is an important concept in statistical physics, which is a thermodynamic quantity that provides a measure of the stability of a physical system \cite{bethe1935statistical}. The use of Bethe free energy within the context of graphical models is a critical step in theoretically understanding the BP decoding algorithms \cite{1459044,6788554}. This subsection introduces the concept of Bethe free energy {based on which subsequent variations will be developed.} 

\iffalse
The BP algorithm plays an important role in decoding and even gives surprisingly good approximate results for graphical models with cycles \cite{1459044}. The notion that fixed points of BP correspond to the extrema of the Bethe free energy is an essential step in the theoretical understanding of this success \cite{6788554}. 
\fi

Let $i\in [1,n]$ and $\alpha \in [1,m]$ index the variable nodes and the check nodes in a factor graph, respectively. Assume that $\bm{x}=\{x_1,\cdots,x_n\} \in \{0,1\}^n$ is transmitted through BMS channel $\bm{W}=\{W_1,\cdots,W_n\}$, and $\bm{y}=\{y_1,\cdots,y_n\}$ is the received signal. The MAP decoding is based on the following conditional density function
$$
p(x_1,...,x_n|y_1,...,y_n)=\frac{\prod_{1}^{n}{f_i(x_i)}\prod_{1}^{m}{f_\alpha(\bm{x}_\alpha)}}{Z},
$$
where $f_i(x_i)=W_i(y_i|x_i)$ is the channel transition probability, $f_\alpha(\bm{x}_\alpha) = I(\oplus \bm{x}_\alpha=0)$ {indicates} whether check node $\alpha$ holds, $\bm{x}_{\alpha}=\{ x_i,i \in N(\alpha) \}$, and $Z$ is the partition function.

%Because $Z$ is unknown, and \arccos \underline{}
Since the complexity of MAP decoding through $ \underset{\bm{x}\in C}{\arg\max}~p(x_1,...,x_n|y_1,...,y_n)$ is exponential, we need to find a density function to approximate $p$. {Chapter 14 of \cite{Montanari2009Information}} shows that if the factor graph is cycle-free, then we have the exact formula
 $$
p(\bm{x})=\frac{\prod_{\alpha}{p_\alpha(\bm{x}_\alpha)}}{\prod_{i}{p_i(x_i)}^{d_i-1}},
$$
where $d_i$ is the degree of variable node $i$, and $p_i$, $p_{\alpha}$ are marginal density functions. Hence we consider
$$
b \in \mathfrak{M}=\left\{\frac{\prod_{\alpha}{b_\alpha(\bm{x}_\alpha)}}{\prod_{i}{b_i(x_i)}^{d_i-1}} \ | \ b_i(x_i)=\sum_{\bm{x}_{\alpha} \backslash x_i} b_{\alpha}(\bm{x}_{\alpha})\right\}.
$$

%In a cycle-free factor graph, we can prove that $p \in \mathfrak{M}$ .
The Bethe free energy
$$
F(b)\triangleq U_{bethe}(b)-H_{bethe}(b) =-\ln Z+D(b||p)
$$
characterizes the distance between $b$ and $p$, where
$$
\begin{aligned}
U_{bethe}(b)=&-\sum_{\alpha=1}^{m} \sum_{\bm{x}_{\alpha}} b_{\alpha}\left(\bm{x}_{\alpha}\right) \ln f_{\alpha}\left(\bm{x}_{\alpha}\right) -\sum_{i=1}^{n}\sum_{x_i} b_{i}\left(x_{i}\right) \ln f_{i}\left(x_{i}\right), \\
\end{aligned}
$$

$$
\begin{aligned}
H_{bethe}(b)=&-\sum_{\alpha=1}^{m} \sum_{\bm{x}_{\alpha}} b_{\alpha}\left(\bm{x}_{\alpha}\right) \ln b_{\alpha}\left(\bm{x}_{\alpha}\right) +\sum_{i=1}^{n}\left(d_{i}-1\right) \sum_{x_{i}} b_{i}\left(x_{i}\right) \ln b_{i}\left(x_{i}\right),
\end{aligned}
$$
and
$$
D(b \| p)=\sum_{\bm{x} \in \{0,1\}^n} b(\bm{x}) \log \frac{b(\bm{x})}{p(\bm{x})}
$$
is the relative entropy (Kullback-Leibler divergence). Hence it can be used to measure the gap between the density obtained by BP decoding and the MAP decoding result. For a fixed channel and coding scheme, i.e., fixed $p$, ``$-\ln Z$'' is a constant, which enables us to observe $D(b||p)$ through Bethe free energy.

%we can use $F(b)$ to estimate
By the Lagrange multiplier method, we {introduce} the Lagrange multipliers $\gamma_{\alpha}$ and $\gamma_{i}$ for the normalization constraints, $\lambda_{(\alpha i)}(x_i)$ for the marginalization constraints to find the stationary point of $F(b)$. We thus construct a Lagrangian of the form below
$$
\begin{aligned}
L=& F +\sum_{\alpha=1}^{m} \gamma_{\alpha}\left[\sum_{\boldsymbol{x}_{\alpha}} b_{\alpha}\left(\boldsymbol{x}_{\alpha}\right)-1\right]+\sum_{i=1}^{n} \gamma_{i}\left[\sum_{x_{i}} b_{i}\left(x_{i}\right)-1\right] \\
&+\sum_{i=1}^{n} \sum_{\alpha \in N(i)} \sum_{x_{i}} \lambda_{\alpha i}\left(x_{i}\right)\left[b_{i}\left(x_{i}\right)-\sum_{\bm{x}_{\alpha} \backslash x_{i}} b_{\alpha}\left(\bm{x}_{\alpha}\right)\right].
\end{aligned}
$$

Setting the derivatives of the Lagrangian with respect to the Lagrange multipliers equal to zero gives back the equality constraints. Setting the derivatives of the Lagrangian with respect to $\{b_{\alpha}(x_{\alpha}), b_{i}(x_i)\}$ equal to zero gives the equations
$$
\hat{b}_{\alpha}\left(\bm{x}_{\alpha}\right) \propto f_{\alpha}\left(\bm{x}_{\alpha}\right) \exp \left[\sum_{i \in N(\alpha)} \lambda_{\alpha i}\left(x_{i}\right)\right],
$$
and
$$
\hat{b}_{i}\left(x_{i}\right) \propto \exp \left[\frac{1}{d_{i}-1}\left(-\ln f_i(x_i) + \sum_{a \in N(i)} \lambda_{a i}\left(x_{i}\right)\right)\right] .
$$
Let
\begin{equation}
\label{equ:equ7}
\lambda_{\alpha i}\left(x_{i}\right)=\ln p_{i \rightarrow \alpha}\left(x_{i}\right)=\ln \left(f_i(x_i) \prod_{h \in N(i)\backslash \alpha} p_{h \rightarrow i}\left(x_{i}\right)\right),
\end{equation}
then we have
\begin{equation}
\label{equ:equ8}
\hat{b}_{\alpha}\left(\boldsymbol{x}_{\alpha}\right) \propto f_{\alpha}\left(\boldsymbol{x}_{\alpha}\right) \prod_{i \in N(\alpha)} p_{i \rightarrow \alpha}\left(x_{i}\right),
\end{equation}
and
\begin{equation}
\label{equ:equ9}
\hat{b}_{i}\left(x_{i}\right) \propto f_i(x_i)\prod_{\alpha \in N(i)} p_{\alpha \rightarrow i}\left(x_{i}\right).
\end{equation}

We obtain the fixed-point equations of the BP algorithm from (\ref{equ:equ8}), (\ref{equ:equ9}), and the marginalization and normalization constraints \cite{1459044}. Then we can change the variables from $\{ b_{\alpha}(\bm{x}_{\alpha}), b_i(x_i)) \}$ to $\underline{p}= \{ p_{i \rightarrow \alpha}, p_{\alpha \rightarrow i} \}$ at the BP fixed point,
$$
F_{*}(\underline{p})= -\left(\sum_{\alpha} \mathbb{F}_{\alpha}(\underline{p})+\sum_{i} \mathbb{F}_{i}(\underline{p})-\sum_{(i, \alpha)} \mathbb{F}_{i \alpha}(\underline{p})\right),
$$
where
$$
\mathbb{F}_{\alpha}(\underline{p})=\ln \left[\sum_{\bm{x}_\alpha} f_{\alpha}\left(\bm{x}_{\alpha}\right) \prod_{j \in N(\alpha)} p_{j \rightarrow \alpha}\left(x_{j}\right)\right], \quad
$$

$$
\mathbb{F}_{i}(\underline{p})=\ln \left[\sum_{x_{i}} f_i(x_i)\prod_{h \in N(i)} p_{h \rightarrow i}\left(x_{i}\right)\right], \\
$$

$$
\mathbb{F}_{i \alpha}(\underline{p})=\ln \left[\sum_{x_{i}} p_{i \rightarrow \alpha}\left(x_{i}\right) p_{\alpha \rightarrow i}\left(x_{i}\right)\right].
$$

Since the channel is BMS, we {can assume without loss of generality that the all-zero codeword is sent} and take the expectation over the channel noises, {yielding} the expectation of Bethe free energy with respect to the channel output in a cycle-free factor graph,

\begin{equation}
\label{equ:equ101}
\begin{aligned}
E_{\bm{y}}(F_{*}(\underline{p})) &=-\{\sum_{i} \mathrm{H}\left(c_{i} \circledast \left(\circledast_{\alpha \in N(i)} c_{(\alpha, i)}\right)\right) +\sum_{\alpha} \sum_{i \in N(\alpha)} \mathrm{H}\left(c_{(i, \alpha)}\right)\\
&-\sum_{\alpha} \mathrm{H}\left(\boxast_{i \in N(\alpha)} \mathrm{c}_{(i, \alpha)}\right)
-\sum_{(i, \alpha)} \mathrm{H}\left(\mathrm{c}_{(i, \alpha)} \circledast c_{(\alpha, i)}\right)\}+\sum_{i}H(c_i),
\end{aligned}
\end{equation}
where $\bm{c}=(c_1,\cdots,c_n)$ is the $L$-density of channel message, $c_{(\alpha, i)}$ and $c_{(i, \alpha)}$ are the $L$-densities of  $m_{\alpha \rightarrow i}$ and $m_{i \rightarrow \alpha}$ on the BP fixed point.

{The expectation of $E_{\bm{y}}(F_{*}(\underline{p}))$ with respect to the code ensemble (factor graphs with the same degree distribution) is the potential functional of LDPC ensemble in \cite{6912949}. Correspondingly, our study will focus on the factor graph of a specific code in the next section.}

\section{characterization of decoding efficiency}

To {conduct} an efficiency analysis, we need to explore the internal decoding process rather than focusing on the final decoding result only. Similar to the SNR-BLER paradigm, we characterize the decoding efficiency along two dimensions. Instead of the SNR-BLER curve, we adopt the \emph{number of messages passed (NMP)} and \emph{Gap to maximum A Posteriori (GAP)} to measure complexity and performance, respectively.

\subsection{Number of messages passed (NMP) as the complexity metric (x-axis)}
There are many ways to measure decoding complexity. The common ones are as follows. (1) The computational complexity in terms of the number of mathematical operations is the most commonly used. But it differs significantly across implementations \cite{8719012,7446995,7862885}. (2) Tree search based algorithms, such as Fano decoding, consider the number of nodes visited. But this metric is difficult to be generalized to other algorithms. (3) For belief-propagation-based algorithms, the number of messages passed serves our purpose. First, it applies to a wide range of polar and LDPC decoding algorithms; second, the complexity of different algorithms can be aligned; third, it is independent of hardware implementation.

In \cite{5407615}, the number of messages passed {is approximated by} the number of edges in the factor graph {times} the number of iterations.  If we only {are only concerned with} the decoding of LDPC codes, this calculation method would be sufficient. However, if we want to have a fine-grained observation of the decoding process, we need a counting method that zooms in on each edge. The number of messages passed in this work is counted as follows. {We define {passing a} V2C or C2V message on a directed edge as a unit of complexity. In the min-sum algorithm, {passing} C2V and V2C messages have similar computational complexity. We can get all the updated C2V messages of a check node by $2d_c-3$ comparison operations, where $d_c$ is the check node degree. The average number of comparison operations required for any C2V message passing is less than 2.} This property {is also satisfied by V2C message passing} in the V2C message. We add up all the LLR values passed to the variable node and subtract the corresponding message when passing to the check node.

%Each message passing has equal complexity without considering the degree of its adjacent node, because e
For LDPC codes, the arrowheaded line from $c_1$ to $v_1$ in Fig. \ref{fig1}(a) is an example of C2V message. We can obtain all the LLRs from $c_1$ to adjacent nodes by finding the two smallest values of $|m_{v_i \rightarrow c_1}| \left(i\in \left(1, 4 \right)\right)$, and count the amount of floating-point computation required for each edge as shown in the first row of Table~I. V2C message passing is shown in Fig. \ref{fig1}(b), and its complexity can be analyzed in a similar way.

For polar codes, the factor graph is equivalent to a bipartite graph as shown in Fig. 2(c,d). Hence the same rules to measure complexity can be applied to the C2V or V2C update in polar codes. Specifically, the \emph{F} function in (\ref{equ:equ4}) is a C2V update of a degree-3 check node, and the \emph{G} function in (\ref{equ:equ5}) is a V2C update of a degree-2 variable node. The required computation is shown in the last two rows of Table~I. Notice that we do not {take into account} hard messages in the polar code factor graph when considering messages passed. This is in accordance with the characteristics of the SC decoding algorithm. Since the complexity of a Boolean or assignment operation is negligible compared to a floating-point operation, we can assume that a hard decision removes a node (and its incident edges) from the factor graph. This complexity model is more consistent with the actual cost. At the same time, if we run the soft cancellation (SCAN) or BP algorithm on the polar factor graph, each edge updating that involves a floating-point operation will still be counted as a message passed.

\begin{figure}[htbp]
\centering
\subfigure[C2V in LDPC code]{
\includegraphics[width=0.3\textwidth]{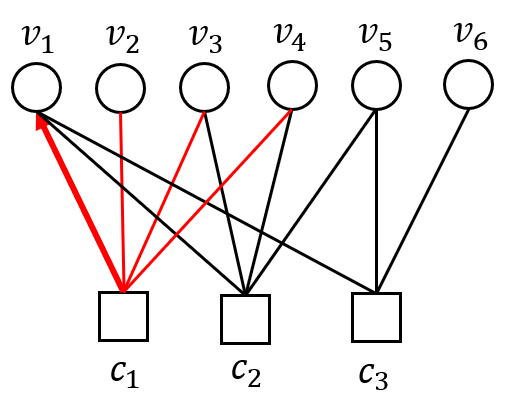}
%\caption{fig1}
}
\quad
\subfigure[V2C in LDPC code]{
\includegraphics[width=0.3\textwidth]{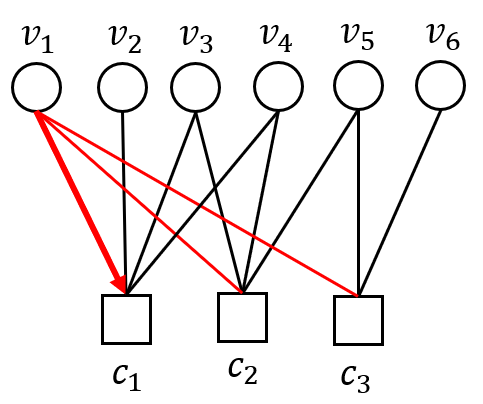}
}
\quad
\subfigure[C2V in polar code]{
\includegraphics[width=0.35\textwidth]{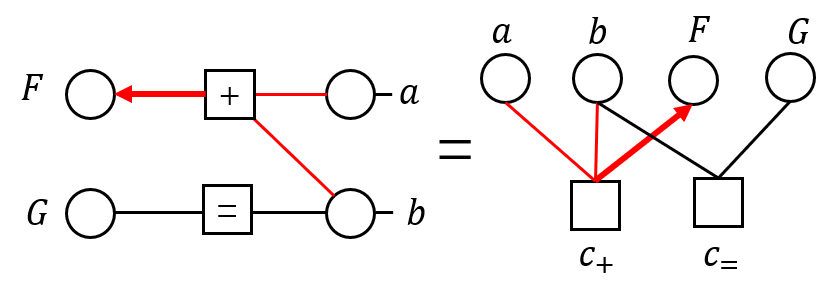}
}
\quad
\subfigure[V2C in polar code]{
\includegraphics[width=0.35\textwidth]{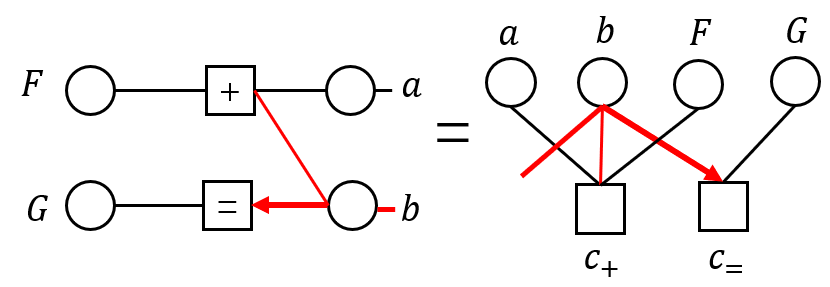}
}
\caption{Four types of message passing in LDPC and polar codes.}
\label{fig1}
\end{figure}

\begin{table}[htbp]
\label{tabCOM}
\begin{center}
\caption{Floating-point computation of each type of message passing under Min-sum algorithm.}
\begin{threeparttable}
\begin{tabular}{|c|c|c|c|}
\hline
         &  Addition & Subtraction & Comparison \\ \hline
LDPC C2V & 0                            & 0           &{$2-\frac{3}{d_c}$}         \\ \hline
LDPC V2C & 1                            & 1           & 0          \\ \hline
Polar C2V  & 0                            & 0           & 1          \\ \hline
Polar V2C  & 1                            & 0           & 0          \\ \hline
\end{tabular}
\begin{tablenotes}
\footnotesize
    \item[1]    {$d_c$ denotes the degree of check node.}
\end{tablenotes}
\end{threeparttable}
\end{center}
\end{table}

Given the above analyses, we find that different types of message passing involve similar amount of computation, which allows us to use NMP as a simplified measure of complexity. {Although} NMP does not capture every detail, it provides a high-level abstraction of the complexity model for message-passing decoding algorithms.

\subsection{Gap to maximum \emph{A Posteriori} (GAP) as the performance metric (y-axis)}

In this subsection, we discuss the performance metrics to track the convergence process as messages are passed along the edges. Many performance metrics can be used for this purpose. Naturally, one would consider BER and BLER. For BER, we can obtain a good approximation based on the average entropy. For BLER, it is only observable at the final decoding stage, and fails to track the effect of early-stage message passing. We instead define a new metric based on Bethe free energy.

{By introducing NMP into density evolution, we can define the average entropy of a factor graph after $t$ messages passed.
\begin{definition}
    The average entropy, $E: \mathcal{X} \times \mathbb{N} \rightarrow \mathbb{R}$, of the number of messages passed (NMP) $t$ and channel messages with $L$-density $\bm{c}=(c_1,\cdots,c_N)$
\begin{equation}
\begin{aligned}
\mathrm{E}(\bm{c}, t) \triangleq \frac{1}{N} \sum_{i} \mathrm{H}\left(c_{i} \circledast \left(\circledast_{\alpha \in N(i)} c_{(\alpha, i)}^{(t)}\right)\right).
\end{aligned}
\end{equation}
\label{defentropy}
\end{definition}
The average entropy $\mathrm{E}(\bm{c}, t)$ can be analytically or numerically obtained. As shown in Fig. \ref{fig.UPPERBOUND} and the following theorem, it accurately matches the BER performance.
\begin{theorem}
    Let $P_{i,t}$ denote the error rate of variable node $i$ after $t$ messages passed, if the factor graph is cycle-free, then we have the average error rate of all variable nodes
    $$
    P_{t} =\frac{1}{N}\sum_iP_{i,t} \leq \frac{1}{2}\mathrm{E}(\bm{c}, t).
    $$
\end{theorem}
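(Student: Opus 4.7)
The plan is to establish the bound first at the level of a single variable node and then average. Fix a variable node $i$. Because the factor graph is cycle-free, the intrinsic channel message $m_0$ at $i$ and the incoming C2V messages $\{m_{\alpha \to i}^{(t)}\}_{\alpha \in N(i)}$ are computed from disjoint subgraphs and are therefore statistically independent after conditioning on the transmitted codeword. Consequently, the total LLR $M_i^{(t)} = m_0 + \sum_{\alpha \in N(i)} m_{\alpha \to i}^{(t)}$ used for the hard decision at node $i$ has $L$-density exactly $c_{(i,t)} := c_i \circledast (\circledast_{\alpha \in N(i)} c_{(\alpha, i)}^{(t)})$. Because every channel in $\bm{W}$ is BMS, I can assume without loss of generality that the all-zero codeword is transmitted, in which case $c_{(i,t)}$ is a symmetric probability measure and $P_{i,t} = \int_{-\infty}^{0} c_{(i,t)}(y)\,dy$, with the usual convention of randomized tie-breaking on any atom at $0$.

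The technical heart of the argument is to show that, for \emph{any} symmetric $L$-density $c$,
\[
2\int_{-\infty}^{0} c(y)\,dy \;\leq\; H(c) \;=\; \int_{-\infty}^{\infty} c(y)\log_2(1+e^{-y})\,dy.
\]
I would prove this by folding the positive half-line onto the negative half using the symmetry relation $c(-y) = e^{-y}c(y)$. Substituting $y \mapsto -y$ in the portion of the integral over $y>0$ rewrites the entropy as
\[
H(c) \;=\; \int_{-\infty}^{0} c(y)\,\bigl[\,\log_2(1+e^{-y}) + e^{-y}\log_2(1+e^{y})\,\bigr]\,dy.
\]
Setting $u = -y \geq 0$, the bracketed factor becomes $h(u) := \log_2(1+e^u) + e^u\log_2(1+e^{-u})$. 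A short calculation yields $h(0) = 2$ and $h'(u) = e^{u}\log_2(1+e^{-u}) \geq 0$, so $h(u) \geq 2$ for every $u \geq 0$, whence $H(c) \geq 2P_{i,t}$ follows at once.

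Combining the two steps gives $P_{i,t} \leq \tfrac{1}{2}H(c_{(i,t)})$ for every variable node; averaging over $i \in \{1,\dots,N\}$ and invoking Definition~\ref{defentropy} produces the claimed bound $P_t \leq \tfrac{1}{2}\mathrm{E}(\bm{c},t)$.

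The step I expect to be the main obstacle is the pointwise inequality $h(u) \geq 2$, together with the careful bookkeeping needed to put it to use: the symmetry relation $c(-y) = e^{-y}c(y)$ has to be combined with the log-weight $\log_2(1+e^{-y})$ inside the integral, and one must handle a possible atom of $c$ at $y=0$ when writing the error probability as a single integral over $(-\infty,0)$. Once that elementary lemma is in place, the cycle-free hypothesis makes the convolutional representation of $c_{(i,t)}$ and the subsequent averaging entirely routine.
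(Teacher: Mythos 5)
Your proposal is correct and follows essentially the same route as the paper: a per-node bound $P_{i,t} \le \tfrac{1}{2} H\bigl(c_i \circledast (\circledast_{\alpha\in N(i)} c_{(\alpha,i)}^{(t)})\bigr)$ obtained from cycle-freeness (so that the aggregated LLR has exactly the stated convolution as its $L$-density and the sign decision is bit-wise MAP), followed by averaging over $i$. The only difference is that the paper cites the Kovalevskij inequality $P_e \le \tfrac{1}{2}H(X\mid Y)$ as a black box, whereas you re-derive it for symmetric $L$-densities via the folding identity $c(-y)=e^{-y}c(y)$ and the pointwise bound $h(u)\ge h(0)=2$; that calculation is correct (including the half-weight on a possible atom at $0$ under randomized tie-breaking) and simply makes the cited step self-contained.
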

\begin{proof}

    Let $x_i$ and $y_i$ denote the channel input and output of variable node $i$ respectively, and $Y_{i,t}$ denote the set of channel outputs of the neighbors connected to $x_i$ after $t$ messages passed.
    The hard decision through the marginal probability ($P(x_i=0)>0.5$ or not) is {bit-wise optimal}. By the Kovalevskij inequality \cite{5625631}, we have
    $$
    P_{i,t} \leq \frac{1}{2}H(x_i\mid y_i, Y_{i,t}) = \frac{1}{2}H\left(c_{i} \circledast \left(\circledast_{\alpha \in N(i)} c_{(\alpha, i)}^{(t)}\right)\right).
    $$
\end{proof}
\begin{figure}[htbp]
    \centering
    \includegraphics[width=0.6\textwidth]{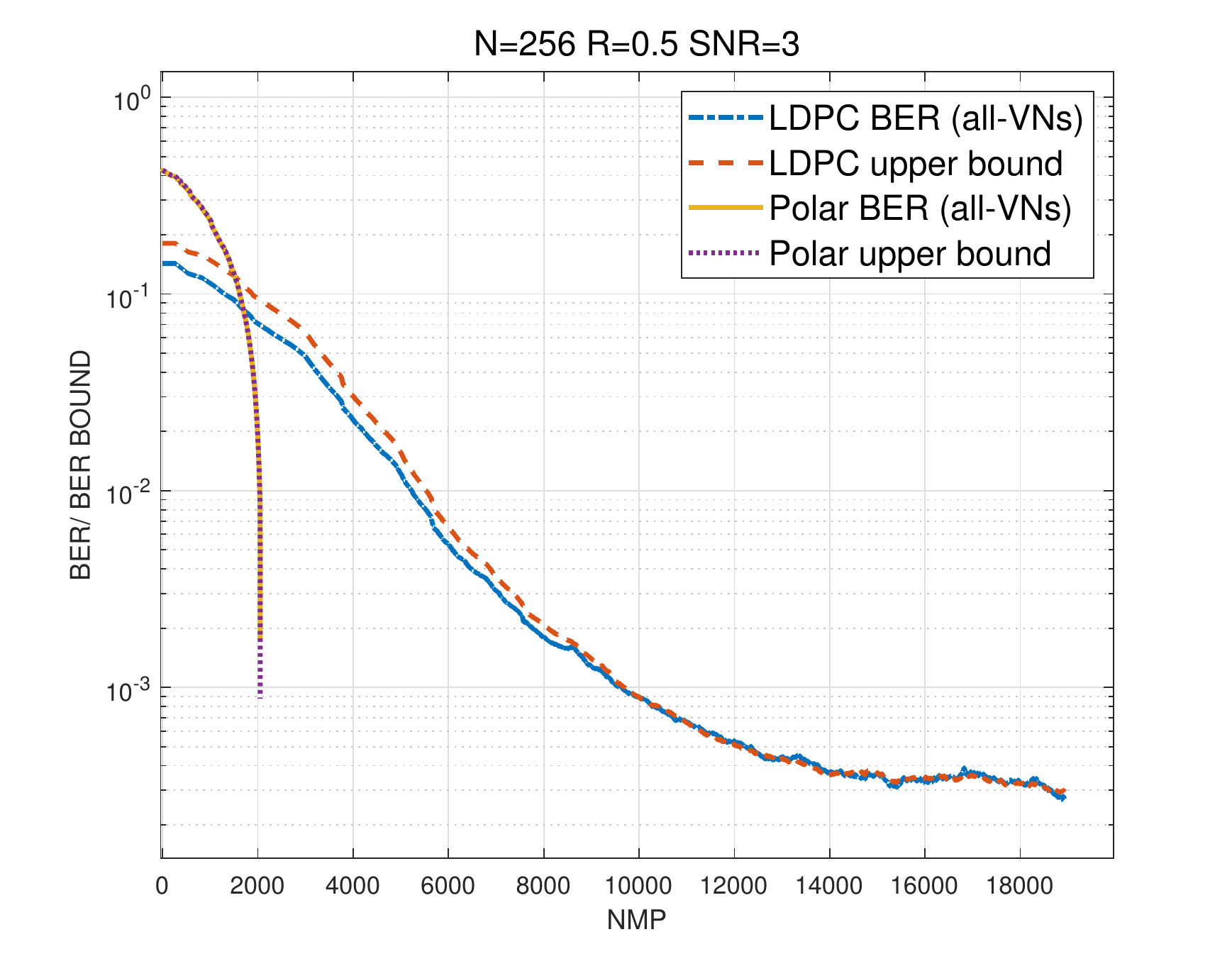}
    \caption{BER evolves during message passing (constructions follow 5G New Radio).}
    \label{fig.UPPERBOUND}
\end{figure}

It should be noted that the above BER is defined over all variable nodes (VNs) in the factor graph, while conventionally a BER is defined over the VNs corresponding to code bits only. It leads to the phenomenon that the initial values vary over different coding schemes, which can be observed in Fig.~\ref{fig.UPPERBOUND}.
}

An alternative metric is the statistical distance between an intermediate estimate and the maximum \emph{a posteriori} estimate of the received codeword. Similar metrics like Bethe free energy have been applied to prove the threshold saturation for spatially-coupled LDPC codes \cite{6912949}. The potential functional in \cite{6912949} {can be used to qualitatively determine} whether BP decoding will converge to a fixed point, but this work provides a quantitative analysis to track the behavior of BP decoding with each message passed  until convergence. Rather than {adopting} the original definition of Bethe free energy or averaging over the code ensemble \cite{6912949}, we consider a potential functional that results from averaging Bethe free energy over the channel realizations. Since we no longer average over the code ensemble, we are able to consider the decoding process of a specific code or even a specific factor graph. This will be demonstrated shortly.

First we {propose our GAP potential functional by modifying the original definition of Bethe free energy.} Recall the expectation of Bethe free energy with respect to the channel output
$$
\begin{aligned}
E_{\bm{y}}(F_{*}(\underline{p})) &=-\{\sum_{i} \mathrm{H}\left(c_{i} \circledast \left(\circledast_{\alpha \in N(i)} c_{(\alpha, i)}\right)\right) \\
&+\sum_{\alpha} \sum_{i \in N(\alpha)} \mathrm{H}\left(c_{(i, \alpha)}\right)-\sum_{\alpha} \mathrm{H}\left(\boxast_{i \in N(\alpha)} \mathrm{c}_{(i, \alpha)}\right) \\
&-\sum_{(i, \alpha)} \mathrm{H}\left(\mathrm{c}_{(i, \alpha)} \circledast c_{(\alpha, i)}\right)\}+\sum_{i}H(c_i).
\end{aligned}
$$
This formula focuses on $c_{(\alpha, i)}$ and $c_{(i, \alpha)}$, the convergence of $L$-densities on the BP fixed point. To characterize the decoding process, we find that even if $\underline{p}$ does not satisfy the fixed-point equation, $F_*(\underline{p})$ is still computable and its stationary point is a BP fixed point. Hence we can expand the domain of the variables $\mathrm{c}_{(i, \alpha)}$ and $\mathrm{c}_{(\alpha, i)}$ in $E_y(F_{*}(\underline{p}))$ from the BP fixed point to the whole decoding process by adding the number of messages passed as a new variable, and remove the terms $\sum_{i}H(c_i)$ which do not change with the decoding process. Based on such a generalization, we define a potential functional to track the message passing process.
 %or measure ``how far is the current decoding state from the MAP decoding result''.

\begin{definition}
The gap to maximum a posteriori (GAP) is the potential functional, $U: \mathcal{X} \times \mathbb{N} \rightarrow \mathbb{R}$, of the number of messages passed $t$ and channel messages with $L$-density $\bm{c}=(c_1,\cdots,c_N)$
%of the $\operatorname{LDPC}(\lambda, \rho)$ ensemble and a channel $\mathrm{c} \in \mathcal{X}$ is

\begin{equation}
\begin{aligned}
\mathrm{U}(\bm{c}, t) & \triangleq -\frac{1}{N} \{\sum_{i} \mathrm{H}\left(c_{i} \circledast \left(\circledast_{\alpha \in N(i)} c_{(\alpha, i)}^{(t)}\right)\right) +\sum_{\alpha} \sum_{i \in N(\alpha)} \mathrm{H}\left(c_{(i, \alpha)}^{(t)}\right)\\
&-\sum_{\alpha} \mathrm{H}\left(\boxast_{i \in N(\alpha)} \mathrm{c}_{(i, \alpha)}^{(t)}\right) -\sum_{(i, \alpha)} \mathrm{H}\left(\mathrm{c}_{(i, \alpha)}^{(t)} \circledast c_{(\alpha, i)}^{(t)}\right)\}.
\end{aligned}
\label{equ100}
\end{equation}
\label{def2}
\end{definition}
%where $c_{(\alpha, i),t}$ and $c_{(i, \alpha),t}$ are the $L$-density of  $m_{\alpha \rightarrow i}$ and $m_{i \rightarrow \alpha}$ after $t$ time of message passing.
%Here $N$ is the number of channel input signals, not the number of variable nodes in the factor graph. This ensures $\mathrm{U}(\bm{c},t)$ has the same initial value in different factor graphs with same SNR and code rate. And
The initial value and update rule are given as follows.

$$c_{(\alpha, i)}^{(0)}= c_{(i, \alpha)}^{(0)}=\Delta_0,$$
$$c_{(\alpha, i)}^{(t)}=\boxast_{j \in N(\alpha)\backslash i} \ c_{(j, \alpha)}^{(t-1)},$$
and
$$c_{(i, \alpha)}^{(t)}=c_{i} \circledast \left(\circledast_{h \in N(i)\backslash \alpha } c_{(h, i)}^{(t-1)}\right).$$
Specifically, when $i$ represents an information bit or frozen bit in polar codes, we have $c_{(i, \alpha)}^{(t)}=\Delta_{\infty}.$

% information bit or   when consider the change of $\mathrm{U}(\bm{c}, t)$
{
\begin{remark}
Note that the average entropy in Definition \ref{defentropy} appears as part of GAP. The four terms in GAP denote the entropy of the variable nodes, the edges, the check nodes, and the entropy of variable nodes at time $(t-1)$, respectively. Different from the average entropy, the initial value of GAP only depends on the code rate and the channel. The proof of this result will be given in the next subsection.
\end{remark}
}
\begin{remark}
In Definition \ref{def2}, we assume that the factor graph $G$ and scheduling policy $S$ are fixed. Strictly, $\mathrm{U}(\bm{c}, t)$ should be written as $\mathrm{U}(\bm{c}, t, G, S)$, otherwise we do not have $\mathrm{U}(\bm{c}, t)=\mathrm{U}(\bm{c}, t^{'})$ when $t=t^{'}$. The factor graph is associated with the coding scheme, and the scheduling policy can be described as follows.
\begin{definitionNoParens} [(Scheduling policy)]
If we label each directed edge in a factor graph from $1$ to $k$, then a scheduling policy can be denoted by a sequence $S=\{s_1,\cdots, s_p\}$, where $p \in \mathbb{N}$ and $s_i \in [1,k]~(1\leq i\leq p)$.
\end{definitionNoParens}
\end{remark}

The scheduling policy $S$ indicates which edge should be updated at each step. Compared with the description of layered BP or other previous studies, this definition gives us a better characterization of scheduling policy. By viewing decoding as a dynamic process, we can zoom in to observe the differences of $\mathrm{U}(\bm{c}, t)$ between factor graphs and the efficiency of scheduling policies. The comparison results can be found in Sections IV and V.

%comparing and

%But it takes into account the time domain which is the key to better understanding the factor graph design and scheduling policy problem. By viewing decoding as a dynamic process, we can zoom in to determine the differences between decoding algorithms, the efficiency of scheduling policies, and find room for improvement.

%when we consider the change of $\mathrm{U}$. If not, we can find that $\mathrm{U}(\bm{c}, t)$ is not well-defined, that is, if $t=t^{'}$, then we do not have $\mathrm{U}(\bm{c}, t)=\mathrm{U}(\bm{c}, t^{'})$.
%\subsubsection{From channel capacity to MAP decoding}
\subsection{Monotonicity and boundedness}

Next, we will discuss the monotonicity and boundedness of $\mathrm{U}(\bm{c},t)$ and reveal its relationship to message passing. Consider the communication over a BMS channel under the all-zero codeword assumption. Let $X=\{x_1,\cdots,x_N\} \in \{0,1\}^N$ and $Y=\{y_1,\cdots, y_{N}\} \in \mathbb{R}^N$ denote the input and output of the BMS channel respectively, $\bm{c}=(c_1,\cdots,c_N)$ be the $L$-density of the output, and $R$ be the code rate.

%In this subsection, we fixed factor graph G
%We can briefly summarize the change interval of $\mathrm{U}(\bm{c}, t)$ as the result from surplus capacity to MAP decoding result.
%in several propositions and theorems to

\begin{theorem}
Assume that the fixed factor graph $G$ is cycle-free, then for any scheduling policy $S$ and $t \geq 1$, $\mathrm{U}(\bm{c}, t)$ is monotonic decreasing with $t$. The upper {limit} of $\mathrm{U}(\bm{c}, t)$ is $1-R-\frac{1}{N}\sum_{i=1}^{N}H(c_i)$ and the lower {limit} is $-\frac{1}{N}H(X|Y)$.
\label{thm100}
\end{theorem}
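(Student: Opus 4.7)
The plan is to establish the three claims in sequence: monotonicity in $t$, the value at $t=0$, and the value at the BP fixed point. The main algebraic tools are Proposition (iii), which converts a $\circledast$-entropy into a $\boxast$-entropy via $H(x\circledast y)+H(x\boxast y)=H(x)+H(y)$, and Proposition (iv), which gives the sign of $H$ applied to a $\circledast$- or $\boxast$-product of two signed differences of comparable $L$-densities.

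A preliminary fact is that, along every directed edge and under any scheduling $S$, the sequence of messages is monotonically non-increasing in the degradation order: $c_{(\alpha,i)}^{(t)}\preceq c_{(\alpha,i)}^{(t-1)}$ and $c_{(i,\alpha)}^{(t)}\preceq c_{(i,\alpha)}^{(t-1)}$. This follows by induction on $t$, with base case $c^{(0)}=\Delta_0$ (the top element) and inductive step from Proposition (i), because each update takes $\boxast$ or $\circledast$ over the currently available (hence less degraded) versions of its inputs. For monotonicity of $\mathrm{U}$ itself, consider a step that updates a C2V edge $c_{(\alpha_0,i_0)}$ from $c'$ to $c''$: only the variable-belief term at $i_0$ and the edge-product term on $(i_0,\alpha_0)$ change. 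Letting $B_0\triangleq c_{i_0}\circledast(\circledast_{\alpha\neq\alpha_0}c_{(\alpha,i_0)}^{(t)})$ be the current extrinsic product into $i_0$ and $M_0\triangleq c_{(i_0,\alpha_0)}^{(t)}$ the current V2C on the same edge, the preliminary fact forces $M_0\succeq B_0$ (since $M_0$ was frozen from an older, more degraded snapshot) and $c'\succeq c''$. Expanding the two-term change by bilinearity of $\circledast$ and linearity of $H$ yields
\[
-N\,\Delta\mathrm{U}\;=\;H\!\bigl((M_0-B_0)\circledast(c'-c'')\bigr)\;\geq\;0,
\]
where the inequality is Proposition (iv). A V2C step changes three terms, but applying Proposition (iii) to split the affected $\boxast$-entropy into a $\circledast$-entropy plus single-argument entropies reduces the net change to an analogous single $\circledast$-difference $H\!\bigl((d-d')\circledast(N_0-C)\bigr)\geq 0$, handled by the same Proposition (iv) argument.

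For the endpoints, substituting $c_{(\alpha,i)}^{(0)}=c_{(i,\alpha)}^{(0)}=\Delta_0$ into (\ref{equ100}) and using that $\Delta_0$ is the identity for $\circledast$, the absorber for $\boxast$, and satisfies $H(\Delta_0)=1$, the four bracketed terms collapse to $\sum_i H(c_i)$, $|E|$, $M$, and $|E|$, giving $\mathrm{U}(\bm{c},0)=M/N-\frac{1}{N}\sum_i H(c_i)=1-R-\frac{1}{N}\sum_i H(c_i)$. For the lower limit, on a finite cycle-free graph BP converges to its unique fixed point $\{c_{(i,\alpha)}^*,c_{(\alpha,i)}^*\}$ in finitely many flooding sweeps, at which the variable belief $c_i\circledast(\circledast_\alpha c_{(\alpha,i)}^*)$ equals the true posterior $L$-density of $X_i\mid Y$. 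Substituting this together with the per-check Bethe identity $\sum_{i\in N(\alpha)}H(c_{(i,\alpha)}^*)-H(\boxast_{i}c_{(i,\alpha)}^*)=H(X_\alpha\mid Y)$, and then applying the exact tree Bethe entropy formula $H(X\mid Y)=\sum_\alpha H(X_\alpha\mid Y)-\sum_i(d_i-1)H(X_i\mid Y)$, collapses $\mathrm{U}(\bm{c},\infty)$ to $-\frac{1}{N}H(X\mid Y)$. The main obstacle is the per-check identity, which I would prove by writing $H(X_\alpha\mid Y)=H(S\mid Y)+H(X_\alpha\mid S,Y)$ with $S\triangleq\oplus_{i\in N(\alpha)}X_i$, using subtree independence to express $H(X_\alpha\mid S,Y)=\sum_{i\in N(\alpha)}H(X_i\mid Y_i)-H(S\mid Y)$, and invoking BMS channel symmetry to equate $H(X_\alpha\mid S=0,Y)$ with the averaged $H(X_\alpha\mid S,Y)$.
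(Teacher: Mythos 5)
Your proposal is correct and follows essentially the same route as the paper: the same induction giving edge-wise degradation monotonicity (Lemma~\ref{lemma1}), the same two-term (C2V) and three-term (V2C) cancellation using linearity of $H$ together with Proposition (iii)--(iv) to sign each step (the identity $-N\,\Delta\mathrm{U}=H\bigl((M_0-B_0)\circledast(c'-c'')\bigr)$ is exactly the paper's computation), and the same substitution of $\Delta_0$ for the initial value. The only divergence is at the lower limit: the paper obtains $-\frac{1}{N}H(X|Y)$ by combining monotonicity with the cited tree identity of Lemma~\ref{lemma6}, whereas you re-derive that identity from the Bethe entropy factorization on trees plus a per-check parity/symmetry argument --- a valid and more self-contained treatment of that one ingredient, but not a structurally different proof.
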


The following propositions and lemmas will accomplish the proof of Theorem~\ref{thm100} and help us to understand the meaning of $\mathrm{U}(\bm{c}, t)$ in the decoding process.
%the Bethe free energy $F(b)$ getting smaller after $t$ times of message passing, that is, $b$ getting closer to $p(x_1...x_n|y_1...y_n)$

\begin{propositionNoParens}[(Initial value)]
The initial value $\mathrm{U}(\bm{c},0)$ of a fixed factor graph is $1-R-\frac{1}{N}\sum_{i}H(c_i)$.
\label{prop1}
\end{propositionNoParens}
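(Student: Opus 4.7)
The plan is to substitute $t=0$ directly into Definition~\ref{def2} and evaluate the resulting four entropy sums using the initial conditions $c_{(\alpha,i)}^{(0)} = c_{(i,\alpha)}^{(0)} = \Delta_0$. Three elementary facts will drive the computation: (i) $\Delta_0$ is the identity for variable-node convolution, so $c \circledast \Delta_0 = c$; (ii) $\Delta_0 \boxast \Delta_0 = \Delta_0$, since $\tanh(0/2)=0$ is absorbing under the check-node kernel; and (iii) $H(\Delta_0)=\log_2(1+e^{0})=1$ by direct evaluation of the entropy functional.

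With these substitutions the four terms inside the braces of (\ref{equ100}) simplify term-by-term. The first term reduces to $\sum_i H\bigl(c_i \circledast \Delta_0 \circledast \cdots \circledast \Delta_0\bigr) = \sum_i H(c_i)$. The second term sums $H(\Delta_0)=1$ over every directed variable-to-check edge, yielding $|E|$. The third term sums $H(\boxast_{i \in N(\alpha)} \Delta_0) = H(\Delta_0) = 1$ once per check node, yielding $M$. The fourth term sums $H(\Delta_0 \circledast \Delta_0)=1$ over every edge, again yielding $|E|$. The two $|E|$ contributions cancel, so
$$\mathrm{U}(\bm{c},0) \;=\; -\frac{1}{N}\bigl(\textstyle\sum_i H(c_i) + |E| - M - |E|\bigr) \;=\; \frac{M}{N} - \frac{1}{N}\sum_i H(c_i).$$
For an LDPC code whose parity-check matrix has full row rank, $M = N-K$ gives $M/N = 1-R$, which is the claim.

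The main obstacle is the polar case, where $M$ on the factor graph is $(N/2)\log_2 N$ rather than $N-K$, so one cannot simply read off $M/N = 1-R$. Here I would invoke the boundary convention in the remark following Definition~\ref{def2} that sets $c_{(i,\alpha)}^{(0)} = \Delta_{\infty}$ at the $u$-side leaves. Because $H(\Delta_{\infty})=0$ and $\Delta_{\infty}$ is the identity for $\boxast$, these boundary messages contribute $0$ to the edge-count terms $T_2$ and $T_4$ and leave their adjacent check-node terms in $T_3$ unaffected, effectively removing the $K$ information-leaf constraints from the cancellation. Re-running the same cancellation with this adjusted bookkeeping recovers the effective ratio $1-R$, and the identity $\mathrm{U}(\bm{c},0) = 1 - R - \frac{1}{N}\sum_i H(c_i)$ holds uniformly. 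The cycle-free hypothesis is what guarantees that the edge terms pair up cleanly and that no double counting obstructs the cancellation.
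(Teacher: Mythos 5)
Your proof is correct and follows essentially the same route as the paper, which simply asserts that the result ``holds clearly by substituting $c_{(\alpha,i)}^{(0)}=c_{(i,\alpha)}^{(0)}=\Delta_0$'' into the definition of $\mathrm{U}$; you carry out that substitution explicitly, using the correct facts $H(\Delta_0)=1$, $c\circledast\Delta_0=c$, and $\Delta_0\boxast\Delta_0=\Delta_0$, and correctly identify that the stated form requires $M/N=1-R$ (full row rank for LDPC, and the $\Delta_\infty$ boundary convention for polar). Your extra care with the polar factor graph goes beyond what the paper provides and is a welcome addition rather than a deviation.
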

%, where $R$ is the code rate, $\bm{c}=(c_1,\cdots,c_N)$  is the $L$-density of the BMS channel output
%can be proved straightforwardly

This proposition holds clearly by substituting $c_{(\alpha, i)}^{(0)}= c_{(i, \alpha)}^{(0)}=\Delta_0$ into formula (\ref{equ100}). We can find that the initial value is negatively correlated with channel entropy. A negative initial value indicates the code rate exceeds channel capacity.

\begin{lemma}
Consider using a message-passing algorithm in a cycle-free factor graph. For any $t \geq 1$, the $t$-th massage passing (C2V or V2C) satisfies $c_{\alpha, i}^{(t-1)} \succeq c_{\alpha, i}^{(t)}$ or $c_{i, \alpha}^{(t-1)} \succeq c_{i, \alpha}^{(t)}$.
\label{lemma1}
\end{lemma}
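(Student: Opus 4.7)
My plan is to prove the lemma by strong induction on $t$, using part (i) of Proposition (the monotonicity of $\circledast$ and $\boxast$ under the degradation order) as the sole nontrivial tool. It will be convenient to strengthen the induction hypothesis: I claim that, for every directed edge $e$ and all integers $0 \le s_1 \le s_2 \le t$, the densities on $e$ satisfy $c_e^{(s_1)} \succeq c_e^{(s_2)}$. The lemma as stated is the one-step specialization $s_1 = t-1$, $s_2 = t$ at the edge chosen by the scheduling policy at step $t$.

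For the base case $t=1$, all pre-update densities are $\Delta_0$. If step $1$ updates a V2C edge $(i,\alpha)$, the update rule yields $c_{(i,\alpha)}^{(1)} = c_i \circledast \bigl(\circledast_{h \in N(i)\setminus\alpha} \Delta_0\bigr) = c_i$, and since $\Delta_0$ is the greatest element of the degradation order, $c_{(i,\alpha)}^{(0)} = \Delta_0 \succeq c_i = c_{(i,\alpha)}^{(1)}$. If instead step $1$ updates a C2V edge $(\alpha,i)$, the absorbing property of $\Delta_0$ under $\boxast$ gives $c_{(\alpha,i)}^{(1)} = \boxast_{j \in N(\alpha)\setminus i} \Delta_0 = \Delta_0$, so the claim holds with equality. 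For the inductive step, assume the strengthened claim through step $t-1$, and suppose step $t$ updates a V2C edge $(i,\alpha)$ (the C2V case is fully symmetric, substituting $\boxast$ for $\circledast$). Let $t'$ be the most recent prior step at which this edge was touched (with $t' = 0$ if it has never been touched); because messages on an edge do not change between its updates, $c_{(i,\alpha)}^{(t-1)} = c_{(i,\alpha)}^{(t')}$. The update rule then gives
\[
c_{(i,\alpha)}^{(t')} = c_i \circledast \Bigl(\circledast_{h \in N(i)\setminus \alpha} c_{(h,i)}^{(t'-1)}\Bigr), \qquad c_{(i,\alpha)}^{(t)} = c_i \circledast \Bigl(\circledast_{h \in N(i)\setminus \alpha} c_{(h,i)}^{(t-1)}\Bigr).
\]
Applying the strengthened induction hypothesis to each incoming C2V edge $(h,i)$ over $[t'-1,\,t-1]$ yields $c_{(h,i)}^{(t'-1)} \succeq c_{(h,i)}^{(t-1)}$. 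Invoking Proposition (i) one factor at a time then propagates the inequality through the $\circledast$ convolutions, giving $c_{(i,\alpha)}^{(t-1)} = c_{(i,\alpha)}^{(t')} \succeq c_{(i,\alpha)}^{(t)}$, as required; the strengthened claim for step $t$ on every other edge is inherited unchanged from step $t-1$ since those edges are not touched.

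The main obstacle I anticipate is not any deep analysis but the bookkeeping needed to strengthen the hypothesis from the one-step statement in the lemma to a non-adjacent-time comparison: without this strengthening the inductive step stalls, because the relevant incoming densities $c_{(h,i)}^{(t'-1)}$ and $c_{(h,i)}^{(t-1)}$ are generally set at intermediate steps that are neither $t-1$ nor $t$. Once the strengthened invariant is in place, the proof reduces to a single bookkeeping chain plus a single appeal to Proposition (i). The cycle-free assumption plays its standard role: on a tree the update rules act on genuinely independent incoming $L$-densities, so the density-evolution recursion correctly reproduces the true message distributions and the degradation relations are meaningful throughout.
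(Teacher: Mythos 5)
Your proof is correct and follows essentially the same route as the paper's: induction on $t$ combined with Proposition (i); your strengthened invariant comparing non-adjacent times merely makes explicit the bookkeeping that the paper glosses over by writing the update rule with consecutive superscripts (which, taken literally, only covers an edge that was also updated at the immediately preceding step). The only omission is the trivial polar-code case where $i$ is an information or frozen bit and $c_{(i, \alpha)}^{(t)}=\Delta_{\infty}$, which the paper handles separately but which holds immediately since $\Delta_{\infty}$ is the least element of the degradation order.
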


%communication over a BMS channel under the all-zero codeword assumption and
\begin{proof}
We use the induction method. For $t=1$, if the first message passing is V2C, then we have $c_{i, \alpha}^{(1)} = c_i \preceq c_{i, \alpha}^{(0)}$; if the first message passing is C2V, then we have $c_{\alpha, i}^{(1)} = c_{\alpha, i}^{(0)}$. We assume that this conclusion holds if $t \leq k$ and consider $t = k+1$. If $i$ is associated with an information bit or frozen bit in the polar code factor graph, then we get
$$c_{(i, \alpha)}^{(k+1)}=\Delta_{\infty} \preceq c_{(i, \alpha)}^{(k)},$$
else
$$
\begin{aligned}
c_{i, \alpha}^{(k+1)} &= c_i \circledast \left(\circledast_{h \in N(i)\backslash \alpha} \ c_{(h, i)}^{(k)}\right) \\
&\preceq c_i \circledast \left(\circledast_{h \in N(i)\backslash \alpha} \ c_{(h, i)}^{(k-1)}\right) \\
&= c_{i, \alpha}^{(k)}.
\end{aligned}
$$

We can similarly get $c_{\alpha, i}^{(k+1)} \preceq c_{\alpha, i}^{(k)}$ which completes our proof.
\end{proof}

%\begin{lemma}
%Consider communication over a BMS channel under the all-zero codeword assumption, and $B P$ decoding with standard initial condition $X=0 .$ If $\mathrm{B}_{i, r}(F)$ is a tree, then $h_{i}^{(0)} \succeq h_{i}^{(1)} \succeq \cdots \succeq h_{i}^{(t-1)} \succeq h_{i}^{(t)}$ for any $t \leq r-1 .$ Analogously, if $\mathrm{B}_{i \rightarrow a, r}(F)$ is a tree, then $h_{i \rightarrow a}^{(0)} \succeq h_{i \rightarrow a}^{(1)} \succeq \cdots \succeq h_{i \rightarrow a}^{(t-1)} \succeq h_{i \rightarrow a}^{(t)}$ for any $t \leq r-1$.
%\end{lemma}
%$$c_{(i, \alpha)}=c_i \circledast \left(\circledast_{h \in N(i)\backslash \alpha} \ c_{(h, i)}\right),$$
%$$c_{(\alpha, i)}=\boxast_{j \in N(\alpha)\backslash i} \ c_{(j, \alpha)}.$$

\begin{propositionNoParens}[(Monotonic decreasing)]
For any scheduling policy $S$ and $t \geq 1$, the $t$-th message passing (C2V or V2C) satisfies $U(\bm{c},t) \leq U(\bm{c},t-1)$.
\label{thm1}
\end{propositionNoParens}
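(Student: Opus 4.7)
The plan is to analyze the change $\Delta := \mathrm{U}(\bm{c}, t) - \mathrm{U}(\bm{c}, t-1)$ induced by a single message update and show it is non-positive. Since the scheduling policy is arbitrary, I would split into two cases according to whether the $t$-th scheduled edge performs a V2C or a C2V update. In either case only one density, $c_{(i_0, \alpha_0)}^{(t)}$ or $c_{(\alpha_0, i_0)}^{(t)}$, is altered at step $t$, so of the four sums that define $\mathrm{U}(\bm{c}, t)$, all but a handful of summands are untouched. The first job is to catalogue which summands move and to write $-N\Delta$ as an explicit combination of a few entropy evaluations.

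For a V2C update on edge $(i_0, \alpha_0)$ I would set $a := c_{(i_0, \alpha_0)}^{(t-1)}$, $b := c_{(i_0, \alpha_0)}^{(t)}$, $V := c_{(\alpha_0, i_0)}^{(t-1)}$, and $W := \boxast_{j \in N(\alpha_0) \setminus i_0} c_{(j, \alpha_0)}^{(t-1)}$. The first sum of $\mathrm{U}$ depends only on C2V densities and is unaffected; only the $(i_0, \alpha_0)$-summand of the second sum, the $\alpha_0$-summand of the third, and the $(i_0, \alpha_0)$-summand of the fourth change. Applying Proposition iii (the conservation identity $\mathrm{H}(x_1 \boxast x_2) + \mathrm{H}(x_1 \circledast x_2) = \mathrm{H}(x_1) + \mathrm{H}(x_2)$) to the pairs $(W, a)$ and $(W, b)$ cancels the isolated $\mathrm{H}(a)$, $\mathrm{H}(b)$, $\mathrm{H}(W)$ contributions, and after using linearity of $\mathrm{H}$ and of $\circledast$ the remainder collapses to
\[
-N\Delta \;=\; \mathrm{H}\bigl((V - W) \circledast (a - b)\bigr).
\]

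To invoke Proposition iv and conclude that the right-hand side is non-negative, I need the two degradation relations $a \succeq b$ and $V \succeq W$. The first is exactly Lemma \ref{lemma1}. For the second, $V$ was written the last time edge $(\alpha_0, i_0)$ was updated, at some step $t' \leq t - 1$, so $V = \boxast_{j \in N(\alpha_0) \setminus i_0} c_{(j, \alpha_0)}^{(t'-1)}$; Lemma \ref{lemma1} gives $c_{(j, \alpha_0)}^{(t-1)} \preceq c_{(j, \alpha_0)}^{(t'-1)}$ for every $j$, and iterated application of the monotonicity of $\boxast$ (Proposition i) then yields $W \preceq V$. If edge $(\alpha_0, i_0)$ has never been written, $V = \Delta_0$ dominates trivially. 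The C2V case is symmetric: with $a, b$ the old and new C2V densities at $(\alpha_0, i_0)$ and $U', V'$ the analogous $\circledast$-products at the endpoint $i_0$, the same two-step derivation gives $-N\Delta = \mathrm{H}\bigl((V' - U') \circledast (a - b)\bigr) \geq 0$, where now $V' \succeq U'$ follows from Lemma \ref{lemma1} combined with the monotonicity of $\circledast$.

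The step I expect to be most delicate is the bookkeeping that isolates \emph{which} summands actually change: one must confirm that a C2V update leaves the third sum genuinely unaltered (it is a pure $\boxast$ over V2C messages at $\alpha_0$) and that a V2C update leaves the first sum unaltered, and one must verify that the two difference measures entering Proposition iv are oriented consistently so that no sign flips. A secondary technicality is the handling of polar information/frozen variable nodes, for which $c_{(i, \alpha)}^{(t)} \equiv \Delta_\infty$ is enforced; since $\Delta_\infty$ is the least element of the degradation order and the convolutions $\Delta_\infty \circledast x$ and $\Delta_\infty \boxast x$ behave cleanly, the same identities apply and the argument carries over without modification.
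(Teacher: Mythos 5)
Your proposal is correct and follows essentially the same route as the paper: isolate the few summands of $\mathrm{U}$ touched by a single update, use the entropy identity $H(x_1\boxast x_2)+H(x_1\circledast x_2)=H(x_1)+H(x_2)$ and linearity to collapse the difference into $H\bigl((\text{old}-\text{new})\circledast(\text{stored}-\text{recomputed})\bigr)$, and then invoke Lemma \ref{lemma1}, the monotonicity of $\circledast$ and $\boxast$, and Proposition iv. Your explicit treatment of the last time an edge was written (step $t'\le t-1$, with the never-written case $\Delta_0$) is in fact slightly more careful than the paper's shorthand $c_{i,\alpha}^{(t-1)}=c_i\circledast(\circledast_h c_{h,i}^{(t-2)})$, but the argument is the same.
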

%Consider the communication over a BMS channel under the all-zero codeword assumption and using a message-passing algorithm in a cycle-free factor graph.

\begin{proof}
We discuss the two types of message passing separately. If the $t$-th message passing is C2V, that is, there is a $c_{\alpha, i}^{(t-1)}$ updated to $c_{\alpha, i}^{(t)}$, then we only consider the change related to the edge $(\alpha,i)$. Notice that the message update only affects the first and the last terms of $U(\bm{c},t-1)$, hence we have
$$
\begin{aligned}
&NU(\bm{c},t)-NU(\bm{c},t-1)\\
&=-H\left(c_{\alpha, i}^{(t)} \circledast c_{i} \circledast \left(\circledast_{h \in N(i) \backslash \mathrm{\alpha}} c_{h, i}^{(t-1)}\right) \right)\\
&+H\left(c_{\alpha, i}^{(t-1)} \circledast c_{i} \circledast \left(\circledast_{h \in N(i) \backslash \mathrm{\alpha}} c_{h, i}^{(t-1)}\right) \right)\\
&+H\left(c_{\alpha, i}^{(t)} \circledast c_{i,\alpha}^{(t-1)}\right) -H\left(c_{\alpha, i}^{(t-1)} \circledast c_{i,\alpha}^{(t-1)}\right)\\
&=H\left((c_{\alpha, i}^{(t-1)}-c_{\alpha, i}^{(t)})\circledast c_{i} \circledast \left(\circledast_{h \in N(i) \backslash \mathrm{\alpha}} c_{h, i}^{(t-1)}\right) \right)\\
&-H\left((c_{\alpha, i}^{(t-1)}-c_{\alpha, i}^{(t)}) \circledast c_{i,\alpha}^{(t-1)}\right)\\
&=H\left((c_{\alpha, i}^{(t-1)}-c_{\alpha, i}^{(t)})\circledast (c_{i} \circledast \left(\circledast_{h \in N(i) \backslash \mathrm{\alpha}} c_{h, i}^{(t-1)}\right)-c_{i,\alpha}^{(t-1)}) \right).
\end{aligned}
$$
By Lemma~\ref{lemma1}, we have $c_{\alpha, i}^{(t-1)} \succeq c_{\alpha, i}^{(t)}$. Since $H(x_1 \circledast x_2) \geq H(x_3 \circledast x_2)$ when $x_1 \succeq x_3$, we have
$c_{i, \alpha}^{(t-1)} = c_{i} \circledast \left(\circledast_{h \in N(i) \backslash \mathrm{\alpha}} c_{h, i}^{(t-2)}\right) \succeq c_{i} \circledast \left(\circledast_{h \in N(i) \backslash \mathrm{\alpha}} c_{h, i}^{(t-1)}\right)$, hence $U(c,t)-U(c,t-1) \leq 0$.

If the $t$-th message passing is V2C, using $H(x_1 \boxast x_2)=H(x_1)+H(x_2)-H(x_1 \circledast x_2)$, then we can similarly get

$$
\begin{aligned}
&U(\bm{c},t)-U(\bm{c},t-1)\\
&=\frac{1}{N}H\left((c_{i,\alpha}^{(t-1)}-c_{i,\alpha}^{(t)})\circledast \left(\left(\boxast_{j \in N(\alpha) \backslash i} c_{j, \alpha}^{(t-1)}\right)-c_{\alpha,i}^{(t-1)}\right) \right)\\
&\leq 0.
\end{aligned}
$$

\end{proof}

We can get a sufficient and necessary condition for $U(\bm{c},t)$ to be strictly decreasing.
\begin{corollary}
If the $t$-th message passing is C2V, then $U(\bm{c},t) < U(\bm{c},t-1)$ if and only if
$$c_{\alpha, i}^{(t-1)} \succ c_{\alpha, i}^{(t)} \text{ and } c_{i, \alpha}^{(t-1)} \succ c_{i} \circledast \left(\circledast_{h \in N(i) \backslash \mathrm{\alpha}} c_{h, i}^{(t-1)}\right).$$

If the $t$-th message passing is V2C, then $U(\bm{c}, t) < U(\bm{c}, t-1)$ if and only if
$$c_{i, \alpha}^{(t-1)}\succ c_{i, \alpha}^{(t)} \text{ and } c_{\alpha, i}^{(t-1)}\succ \left(\mathbb{\boxast}_{j \in N(\alpha) \backslash \mathrm{i}} c_{j, \alpha}^{(t-1)}\right).$$
\label{coro1}
\end{corollary}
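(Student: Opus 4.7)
The plan is to pick up the explicit identity derived in the proof of the preceding monotonicity proposition and characterize precisely when the key non-negative quantity strictly vanishes.

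For a C2V update at time $t$, the earlier proof established
\begin{equation*}
N\bigl[\mathrm{U}(\bm{c},t)-\mathrm{U}(\bm{c},t-1)\bigr] = -H(y_1 \circledast y_2),
\end{equation*}
where $y_1 = c_{\alpha,i}^{(t-1)} - c_{\alpha,i}^{(t)}$ and $y_2 = c_{i,\alpha}^{(t-1)} - c_i \circledast \bigl(\circledast_{h\in N(i)\setminus\alpha} c_{h,i}^{(t-1)}\bigr)$. Lemma~\ref{lemma1}, together with the monotonicity of $\circledast$ under degradation from Proposition i), shows that $y_1 = x_1'-x_1$ and $y_2 = x_2'-x_2$ with $x_1' \succeq x_1$ and $x_2' \succeq x_2$, so Proposition iv) applies and gives $H(y_1 \circledast y_2) \geq 0$. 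Hence $\mathrm{U}(\bm{c},t) < \mathrm{U}(\bm{c},t-1)$ is equivalent to $H(y_1 \circledast y_2) > 0$, and the task reduces to a characterization of strict positivity.

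For the ``only if'' direction I would argue by contrapositive: if either strict ordering fails, say $c_{\alpha,i}^{(t-1)} = c_{\alpha,i}^{(t)}$, then $y_1 \equiv 0$ as a signed measure, and bilinearity of $\circledast$ combined with linearity of $H$ forces $H(y_1 \circledast y_2) = 0$; the other case is identical. For the ``if'' direction I expand bilinearly,
\begin{equation*}
H(y_1 \circledast y_2) = \bigl[H(x_1' \circledast x_2') - H(x_1 \circledast x_2')\bigr] - \bigl[H(x_1' \circledast x_2) - H(x_1 \circledast x_2)\bigr],
\end{equation*}
note that each bracket is non-negative by Propositions i) and ii), and reduce the statement to showing that the first bracket strictly dominates the second. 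The main obstacle is a \emph{strict} form of Proposition i) under $\circledast$: if $u' \succ u$ and $v$ is not the delta at $+\infty$, then $u' \circledast v \succ u \circledast v$. I would prove this by exhibiting a concave non-increasing test function that strictly separates $u'$ from $u$ and showing that the separation is preserved after convolving against any non-degenerate $v$, which amounts to the strict monotonicity of $L$-density convolution on a nontrivial symmetric measure; Proposition ii) then upgrades the first bracket to a strict inequality.

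The V2C case is entirely symmetric: the preceding proof yields the analogous identity $N\bigl[\mathrm{U}(\bm{c},t)-\mathrm{U}(\bm{c},t-1)\bigr] = -H(\tilde y_1 \circledast \tilde y_2)$ with $\tilde y_1 = c_{i,\alpha}^{(t-1)} - c_{i,\alpha}^{(t)}$ and $\tilde y_2 = c_{\alpha,i}^{(t-1)} - \boxast_{j\in N(\alpha)\setminus i} c_{j,\alpha}^{(t-1)}$, and the same argument applied to $(\tilde y_1,\tilde y_2)$ yields the claimed equivalence.
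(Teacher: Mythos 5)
Your overall route is the same as the paper's: the corollary is read off from the identity $N[\mathrm{U}(\bm{c},t)-\mathrm{U}(\bm{c},t-1)]=-H(y_1\circledast y_2)$ established in the proof of the monotonicity proposition (the paper in fact states the corollary without further argument). Your ``only if'' direction is correct: if either strict ordering fails, the corresponding signed measure $y_1$ or $y_2$ is identically zero, and bilinearity of $\circledast$ with linearity of $H$ gives $H(y_1\circledast y_2)=0$.

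The ``if'' direction, however, has a genuine gap at exactly the crux. You correctly reduce to showing that the first bracket in $H(y_1\circledast y_2)=\bigl[H(x_1'\circledast x_2')-H(x_1\circledast x_2')\bigr]-\bigl[H(x_1'\circledast x_2)-H(x_1\circledast x_2)\bigr]$ strictly exceeds the second, but the mechanism you then propose (a strict form of Proposition i) under $\circledast$, upgraded by Proposition ii)) only yields that the first bracket is strictly \emph{positive}. The second bracket equals $H(y_1\circledast x_2)$ and is in general also strictly positive, so positivity of the first bracket says nothing about their difference; worse, their difference \emph{is} the quantity $H(y_1\circledast y_2)$ you are trying to bound away from zero, so the step is circular. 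What is actually needed is a strict version of Proposition iv). A clean way to get it is the series expansion of the entropy functional for symmetric measures, $H(\mathrm{x})=1-\sum_{k\geq 1}\kappa_k M_k(\mathrm{x})$ with $\kappa_k>0$ and $M_k(\mathrm{x})=\int \tanh^{2k}(\alpha/2)\,\mathrm{x}(d\alpha)$ (see \cite{2008Modern}); multiplicativity of $M_k$ under $\boxast$ and the duality in Proposition iii) give $H(y_1\circledast y_2)=\sum_{k}\kappa_k\,[M_k(x_1)-M_k(x_1')]\,[M_k(x_2)-M_k(x_2')]$, and a Jensen argument (writing the degraded density's soft bit as a conditional expectation of the original's) shows that $x'\succ x$ forces $M_k(x)>M_k(x')$, with equality only when the measures coincide. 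With both factors strictly positive for a common $k$, the sum is strictly positive and the equivalence follows; your sketch as written does not reach this conclusion.
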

Corollary~\ref{coro1} implies that ``better'' messages may not lead to an instant decrease of the potential functional. The ``delay'' of message update is necessary. The potential functional of the SC algorithm in the polar code factor graph characterizes this phenomenon.
\begin{corollary}
In the SC decoding algorithm, $U(\bm{c},t)$ decreases if and only if the updated edge is incident with a frozen bit node.
\label{prop5}
\end{corollary}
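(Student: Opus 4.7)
The plan is to apply Corollary~\ref{coro1} to every type of update that arises in the SC schedule on the polar factor graph and to check directly whether its two strict-degradation clauses are simultaneously satisfied. The SC schedule splits into three kinds of updates: upward C2V $F$-operations at a degree-$3$ check node, downward V2C $G$-operations on the internal degree-$2$ variable nodes (together with the C2V computation of the LLR at an information bit), and V2C updates that carry $\Delta_{\infty}$ out of a frozen bit. Throughout I would use three ingredients: the initialization $c_{(\alpha,i)}^{(0)} = c_{(i,\alpha)}^{(0)} = \Delta_{0}$, the convention $c_{(u_k,\alpha)}^{(t)} = \Delta_{\infty}$ for bit nodes once they have been processed, and the DFS-like nature of SC, under which each directed edge is updated exactly once and already-visited branches of the recursion tree are never re-entered.

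First I would dispose of the internal updates. For an upward $F$-update $c_{\alpha,i}^{(t)} = \boxast_{j \in N(\alpha)\setminus i} c_{j,\alpha}^{(t-1)}$ with $i$ an internal variable node, the reverse message $c_{i,\alpha}^{(t-1)}$ has not yet been visited and is still $\Delta_{0}$; the other C2V messages $c_{h,i}^{(t-1)}$ feeding $i$ from above are likewise untouched, and the internal prior $c_i$ equals $\Delta_{0}$, so $c_i \circledast (\circledast_{h \ne \alpha} c_{h,i}^{(t-1)}) = \Delta_{0}$. The second clause of the C2V strict-decrease condition in Corollary~\ref{coro1} therefore reads $\Delta_{0} \succ \Delta_{0}$, which fails, and the same reasoning disposes of an $F$-update whose target is an information bit (degree $1$, null prior). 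For a downward $G$-update on an internal edge, the companion message $c_{\alpha,i}^{(t-1)}$ was fixed by the earlier $F$-step at $\alpha$ as $\boxast_{j \ne i} c_{j,\alpha}^{(\cdot)}$; since the SC recursion never revisits the unprocessed branches of $\alpha$ in between, the individual $c_{j,\alpha}$ retain their $F$-time values and $c_{\alpha,i}^{(t-1)} = \boxast_{j \ne i} c_{j,\alpha}^{(t-1)}$ holds with equality, so the second clause of the V2C strict-decrease condition is again non-strict. Both types of internal updates therefore leave $\mathrm{U}$ unchanged.

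Next I would treat the V2C update $c_{u_k,\alpha}^{(t)} = \Delta_{\infty}$ emanating from a frozen bit $u_k$. The first clause becomes $c_{u_k,\alpha}^{(t-1)} = \Delta_{0} \succ \Delta_{\infty} = c_{u_k,\alpha}^{(t)}$, which is strict because $\Delta_{0}$ and $\Delta_{\infty}$ are the greatest and least elements of the degradation order. Since SC does not need to compute an LLR at a frozen bit, the companion C2V message $c_{\alpha,u_k}^{(t-1)}$ has never been refreshed and remains $\Delta_{0}$, whereas the upward $F$-operations beneath $\alpha$ have already produced a strictly non-trivial $\boxast_{j \ne u_k} c_{j,\alpha}^{(t-1)} \prec \Delta_{0}$. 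Hence the second clause $c_{\alpha,u_k}^{(t-1)} = \Delta_{0} \succ \boxast_{j \ne u_k} c_{j,\alpha}^{(t-1)}$ is also strict, and Corollary~\ref{coro1} delivers $\mathrm{U}(\bm{c},t) < \mathrm{U}(\bm{c},t-1)$. Combining the three cases proves that strict decrease happens exactly on edges incident with a frozen bit node.

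The main obstacle is the bookkeeping underlying the internal $G$-step: one has to verify that, during the interval between the $F$-step that fixed $c_{\alpha,i}$ and the $G$-step on the same edge, none of the lateral messages $c_{j,\alpha}$ for $j \ne i$ has been modified. This is morally clear from the DFS structure of SC but a rigorous statement probably requires an induction on the depth of the SC recursion tree, parameterised by the position of the currently decoded bit. The remainder of the argument is an elementary application of Lemma~\ref{lemma1}, the extremal roles of $\Delta_{0}$ and $\Delta_{\infty}$ in the degradation order, and Corollary~\ref{coro1} itself.
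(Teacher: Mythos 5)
Your proof is correct and follows essentially the same route as the paper: both arguments reduce the claim to Corollary~\ref{coro1} and check that its two strict-degradation clauses can only hold simultaneously on an edge touching a frozen bit, because only a frozen bit carries the prior $\Delta_\infty$, whereas every other update in the one-pass SC schedule merely reproduces what is already computable from neighbours whose messages have not changed in the interim (your version is in fact considerably more explicit than the paper's one-sentence justification, and your flagged bookkeeping induction is exactly the step the paper also leaves implicit). The only update type you do not spell out is the V2C message $c_{(i,\alpha)}^{(t)}=\Delta_\infty$ emitted by an \emph{information} bit after its hard decision (the paper's convention assigns $\Delta_\infty$ to information bits as well), but it is dismissed by precisely your $G$-step argument: the companion C2V message toward that bit was just set equal to the fresh $\boxast$-product, so the second clause of the V2C condition is non-strict and no decrease occurs.
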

%, that is, GAP

Since only frozen bits can offer messages with
$$c_{i, \alpha}^{(t-1)} \succ c_{i} \circledast \left(\circledast_{h \in N(i) \backslash \mathrm{\alpha}} c_{h, i}^{(t-1)}\right),$$
by Corollary~\ref{coro1}, we can get this property which leads to many interesting observations to be shown in the next section.

Then we introduce the conditional entropy of tree code, {which corresponds to the lower limit of} $U(\bm{c},t)$.
%Let $X=\{x_1,\cdots,x_N\} \in \{0,1\}^N$ and $Y=\{y_1,\cdots y_{N}\} \in \mathbb{R}^N$ denote the input and output of the BMS channel respectively.
\begin{lemmaNoParens} [\cite{6589171}]
%Let $G=(V, C, E)$ denote a bipartite tree with variable nodes $V$, check nodes $C$, and edges $E$. Note that $G$ represents a binary linear code of length $|V|$ with $2^{|V|-|C|}$ codewords. Let $X=\left(X_{1}, \ldots, X_{|V|}\right)$ be chosen uniformly at random from the set of codewords and let $Y=\left(Y_{1}, \ldots, Y_{|V|}\right)$ be the result of sending $X_{i}, i=1, \ldots,|V|$, through the BMS channel $\mathrm{c}_{i}$ Let $i$ index the variables, $\alpha$ index the checks, $\partial i \subseteq C$ denote the checks neighboring a variable $i$, and $\partial \alpha \subseteq V$ denote the
If the factor graph is cycle-free, then
$$
\begin{aligned}
H(X|Y)&=\sum_{i} \mathrm{H}\left(c_{i} \circledast \left(\circledast_{\alpha \in N(i)} c_{(\alpha, i)}\right)\right) \\
&+\sum_{\alpha} \sum_{i \in N(\alpha)} \mathrm{H}\left(c_{(i, \alpha)}\right)-\sum_{\alpha} \mathrm{H}\left(\boxast_{i \in N(\alpha)} \mathrm{c}_{(i, \alpha)}\right) \\
&-\sum_{(i, \alpha)} \mathrm{H}\left(\mathrm{c}_{(i, \alpha)} \circledast c_{(\alpha, i)}\right).
\end{aligned}
$$
\label{lemma6}
\end{lemmaNoParens}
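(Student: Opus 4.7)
The plan is to derive the identity by first writing $H(X\mid Y)$ using the exact tree factorisation of the posterior, and then translating each term into the $L$-density language introduced earlier.

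On a cycle-free factor graph, belief propagation is exact, so the Bethe form of the posterior holds with equality,
\[
p(\bm{x}\mid\bm{y}) = \frac{\prod_\alpha b_\alpha(\bm{x}_\alpha)}{\prod_i b_i(x_i)^{d_i - 1}}.
\]
Taking $-\log(\cdot)$ of both sides and averaging over the channel output yields $H(X\mid Y) = \sum_\alpha H(\bm{x}_\alpha\mid Y) - \sum_i(d_i-1)\,H(x_i\mid Y)$. Writing $(d_i-1) = \sum_{\alpha\in N(i)} 1 - 1$ and exchanging summations converts this into the edge-based form
\[
H(X\mid Y) = \sum_i H(x_i\mid Y) + \sum_\alpha H(\bm{x}_\alpha\mid Y) - \sum_{(i,\alpha)} H(x_i\mid Y),
\]
which mirrors the four-term structure in the lemma.

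Next I would identify each piece with its $L$-density surrogate. For the variable marginal, conditional independence of the incoming C2V messages on a tree gives $H(x_i\mid Y)=H\!\bigl(c_i\circledast\circledast_{\alpha\in N(i)}c_{(\alpha,i)}\bigr)$, which matches the first sum. For the edge term, cutting the tree along $(i,\alpha)$ decomposes the neighbourhood of $x_i$ into two conditionally independent subtrees whose contributions are precisely the V2C and C2V messages, so $H(x_i\mid Y)=H\!\bigl(c_{(i,\alpha)}\circledast c_{(\alpha,i)}\bigr)$, matching the last sum.

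The main obstacle is the check-node term, for which I would prove
\[
H(\bm{x}_\alpha\mid Y) = \sum_{i\in N(\alpha)} H(c_{(i,\alpha)}) - H\!\Bigl(\boxast_{i\in N(\alpha)} c_{(i,\alpha)}\Bigr).
\]
The argument proceeds in three steps. First, on a tree the V2C messages entering $\alpha$ are conditionally independent sufficient statistics for $\bm{x}_\alpha$ subject only to the parity constraint $\bigoplus_i x_i = 0$. Second, under independent inputs the LLR of $\bigoplus_i x_i$ has density $\boxast_{i\in N(\alpha)} c_{(i,\alpha)}$, which is the analytic content of the tanh-product rule. Third, the entropy chain rule $H(\bm{x}_\alpha)=H(\bm{x}_\alpha\mid\bigoplus_i x_i)+H(\bigoplus_i x_i)$, combined with the fact that the check constraint pins the parity to zero, yields the displayed identity. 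For $|N(\alpha)|=2$ this collapses to item iii of the earlier Proposition; the general case follows by induction on the check degree, using associativity of $\boxast$ and the same Proposition at each step.

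Assembling the three substitutions into the edge-based decomposition produces the lemma. The delicate step is the check-node identity, where care is required to condition on the parity event in the right direction so that the $\boxast$ convolution appears with the correct sign; the remaining manipulations are a straightforward reorganisation of the tree factorisation.
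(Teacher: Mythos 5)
The paper does not prove this lemma at all: it is imported by citation from \cite{6589171}, so there is no in-paper argument to compare against. Judged on its own merits, your proposal is essentially the standard derivation from that literature and is correct in all essentials: the tree factorisation $p(\bm{x}\mid\bm{y})=\prod_\alpha b_\alpha/\prod_i b_i^{d_i-1}$, the rewriting of $(d_i-1)$ to obtain the node--factor--edge decomposition, and the identification of each conditional entropy with an $L$-density entropy via conditional independence of the subtrees. Two small points deserve emphasis. First, the identification $H(x_i\mid Y)=H\bigl(c_{(i,\alpha)}\circledast c_{(\alpha,i)}\bigr)$ for the edge term uses that the densities are taken at the BP fixed point, so that $c_{(i,\alpha)}\circledast c_{(\alpha,i)}=c_i\circledast\bigl(\circledast_{h\in N(i)}c_{(h,i)}\bigr)$; this is implicit in the lemma's notation (no superscript $t$) but should be said. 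Second, you correctly flag the only genuinely delicate step, the check-node identity: the heuristic version via ``condition on the parity event'' is imprecise, because conditioning on the event $\{\oplus x_i=0\}$ is not pointwise the same as conditioning on the parity random variable, and the two observation models induce different distributions on $Y$. The rigorous route is exactly the induction you name: write $H(\bm{x}_\alpha\mid Y)$ by absorbing one leaf at a time, use $I(\oplus_{i\le k}x_i;y_{k+1}\mid y_{1:k})=H(\boxast_{i\le k}c_i)-H((\boxast_{i\le k}c_i)\circledast c_{k+1})$, and invoke Proposition~iii (the degree-two duality rule) together with associativity of $\boxast$ at each step. With that step carried out via the induction rather than the event-conditioning heuristic, the proof is complete.
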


\begin{propositionNoParens} [(Lower limit)]
For fixed cycle-free factor graph $G$ and any scheduling policy, we have $$U(\bm{c},t)\geq -\frac{1}{N}H(X|Y)\geq -P_e -\frac{H_b(P_e)}{N},$$ where $P_e$ is the block error rate and $H_b$ is the binary entropy function.
\label{thm7}
\end{propositionNoParens}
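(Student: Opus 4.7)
The plan is to prove the two inequalities separately. The left bound $U(\bm{c}, t) \geq -\frac{1}{N}H(X|Y)$ will combine the monotonicity of $U$ established in Proposition~\ref{thm1} with the fixed-point identity of Lemma~\ref{lemma6}, while the right bound $-\frac{1}{N}H(X|Y) \geq -P_e - \frac{H_b(P_e)}{N}$ is a rearrangement of Fano's inequality applied to the block MAP decoder.

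For the first inequality, I would fix an arbitrary scheduling policy $S$ and an arbitrary $t \geq 0$, and extend $S$ past step $t$ to a \emph{fair} scheduling $\tilde{S}$ that revisits every directed edge infinitely often. Because $G$ is cycle-free, such a fair extension drives the $L$-densities to the unique BP fixed point in finitely many additional message passes: on a tree, each message stabilizes once all of its upstream messages have been updated at least once, and the limiting densities coincide with the true posterior marginals (for polar frozen leaves, the limit $\Delta_\infty$ is already consistent with the update rule used in Definition~\ref{def2}). By Proposition~\ref{thm1}, $U(\bm{c}, t) \geq U(\bm{c}, t')$ for every $t' \geq t$ along $\tilde{S}$, so
$$
U(\bm{c}, t) \;\geq\; \lim_{t' \to \infty} U(\bm{c}, t').
$$
Applying Lemma~\ref{lemma6} to the fixed-point densities identifies the four-term bracketed expression in Definition~\ref{def2} with $H(X|Y)$, so the limit equals $-\frac{1}{N}H(X|Y)$, giving the lower bound.

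For the second inequality, let $\hat{X}(Y)$ denote the block MAP estimator, whose block error probability is the minimum possible value $P_e$. Since $\hat{X}$ is a deterministic function of $Y$, we have $H(X|Y) \leq H(X|\hat{X})$, and Fano's inequality gives $H(X|\hat{X}) \leq H_b(P_e) + P_e \log_2(2^N - 1) \leq H_b(P_e) + N P_e$. Dividing by $N$ and negating yields the claimed bound, which chained with the first inequality completes the proof.

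The main obstacle is the convergence argument of the first step: one must justify that an arbitrary (possibly adversarial) partial scheduling can be extended into one that drives every message to its BP fixed-point value, and that the four-term functional in Definition~\ref{def2} is continuous enough in the densities that its limit along the extension actually realizes Lemma~\ref{lemma6}. On a cycle-free graph the first part follows by induction on edge depth in the tree, and the second from the monotone degradation guaranteed by Lemma~\ref{lemma1} together with continuity of the entropy functional $H$ and the convolutions $\circledast,\boxast$ under the degradation order. A minor care point is the polar case, where variable nodes corresponding to frozen or decoded information bits must be treated as sending $\Delta_\infty$ so that the ``fixed point'' agrees with the deterministic values of $\mathbf{u}_{\mathcal{A}^c}$ used by the SC decoder.
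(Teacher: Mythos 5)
Your proof is correct and follows essentially the same route as the paper's: monotonicity of $U$ (via Proposition~\ref{thm1}/Corollary~\ref{coro1}) together with Lemma~\ref{lemma6} identifies the BP fixed-point value $-\frac{1}{N}H(X|Y)$ as the minimum, and Fano's inequality yields the second bound. Your fair-extension and continuity argument simply fills in the convergence step that the paper's one-line proof leaves implicit.
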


%Let $X=\{x_1,\cdots,x_N\} \in \{0,1\}^N$ and $Y=\{y_1,\cdots y_{N}\} \in \mathbb{R}^N$ denote the input and output of the BMS channel respectively, then

\begin{proof}
From Corollary~\ref{coro1}, we can find that $U(\bm{c},t)$ at the BP fixed point is minimal. At the same time, by Lemma~\ref{lemma6}, the minimal value is $-\frac{1}{N}H(X|Y)$. The inequality between the BLER and entropy follows from the Fano inequality \cite{5625631}. 
\end{proof}

%$denoted by $U(\bm{c})$, is achievable when the factor graph is cycle-free

{

 For the codes which can deliver reliable communications ($P_e$ is {close to zero}), we have {$U(\bm{c},t) \geq -\frac{1}{N}H(X|Y) \approx 0$, so} the GAP curves of these codes have the same lower {limit}. As discussed in Chapter 15 of \cite{Montanari2009Information}, the equality $U(\bm{c},t) = -\frac{1}{N}H(X|Y)$ holds if and only if the BP decoder achieves the MAP decoding result. Hence, under the assumption of reliable communication, $U(\bm{c},t) > 0$ implies that there is a gap between BP and MAP decoding. In the subsequent sections, we present simulation results to show that, for practical finite-length codes, GAP can be used to compare the decoding efficiencies of various decoding algorithms.
}

\section{Decoding Efficiency analysis of several algorithms}

In this section, we plot the ``number of messages passed (NMP) versus the Gap to maximum \emph{A Posteriori} (GAP)'' curves of several algorithms through simulations and GA analysis for the AWGN channel. The LDPC and polar codes construction follow 5G New Radio. We will introduce these ``NMP-GAP'' curves from three aspects: complexity, slope, and analytical method.
%, while the polar code is constructed by polarization weight

In the following {sequence}, we compare six decoding algorithms for polar and LDPC codes in terms of decoding efficiency:
\begin{enumerate}
\item Flooding BP decoding for LDPC codes;
\item Layered BP decoding for LDPC codes;
\item SC decoding for polar codes;
\item Simplified SC (SSC) decoding \cite{6065237} for polar codes;
\item Soft cancellation (SCAN) decoding \cite{6804940} for polar codes;
\item BP decoding \cite{0Polar} for polar codes.
\end{enumerate}

\subsection{Complexity}

This subsection compares the complexities of the six algorithms and discusses the relationship between complexity and code length. Fig.~\ref{fig3} shows the simulation results of the six algorithms. At the same time, Fig.~\ref{fig10} shows the corresponding SNR-BLER curves as a reference. According to Fig.~\ref{fig10}, we observe that these coding schemes have similar (or slightly different) performances, but we {may overlook their drastic complexity differences without} Fig.~\ref{fig3}. A key information missing in SNR-BLER curve is a quantitative measurement of the decoding complexity. For many applications, we are more concerned about such a huge complexity difference than a modest BLER performance difference (for example, $<0.5dB$). With the proposed paradigm, we are able to compare various decoding algorithms from a new perspective.

As shown in Fig.~\ref{fig3}, the SSC and SC algorithms achieve high efficiency through proper scheduling. This explains why Polar SC/SSC can achieve terabit-per-second decoding. SCAN also has relatively low complexity. But the BP algorithm for polar codes, which runs on the same factor graph, has the lowest decoding efficiency among all schemes. However, the BP algorithm for LDPC codes incurs much less complexity than that of polar codes, thanks to the sparse factor graph. Still, both flooding-BP and layered-BP algorithms exhibit lower decoding efficiency than SC decoding.

%Compared with the NMP-BLER curve, the NMP-Bethe free energy curve has no phase transition phenomenon, which better reflects the decoding process.

%We can intuitively see which of the existing decoding algorithms is more efficient in this model.
%Let's review the time complexity of the SC, SSC, flooding-BP, and layered-BP\cite{b11} algorithm.

\begin{figure}[htbp]
    \centering
    \includegraphics[width=0.7\textwidth]{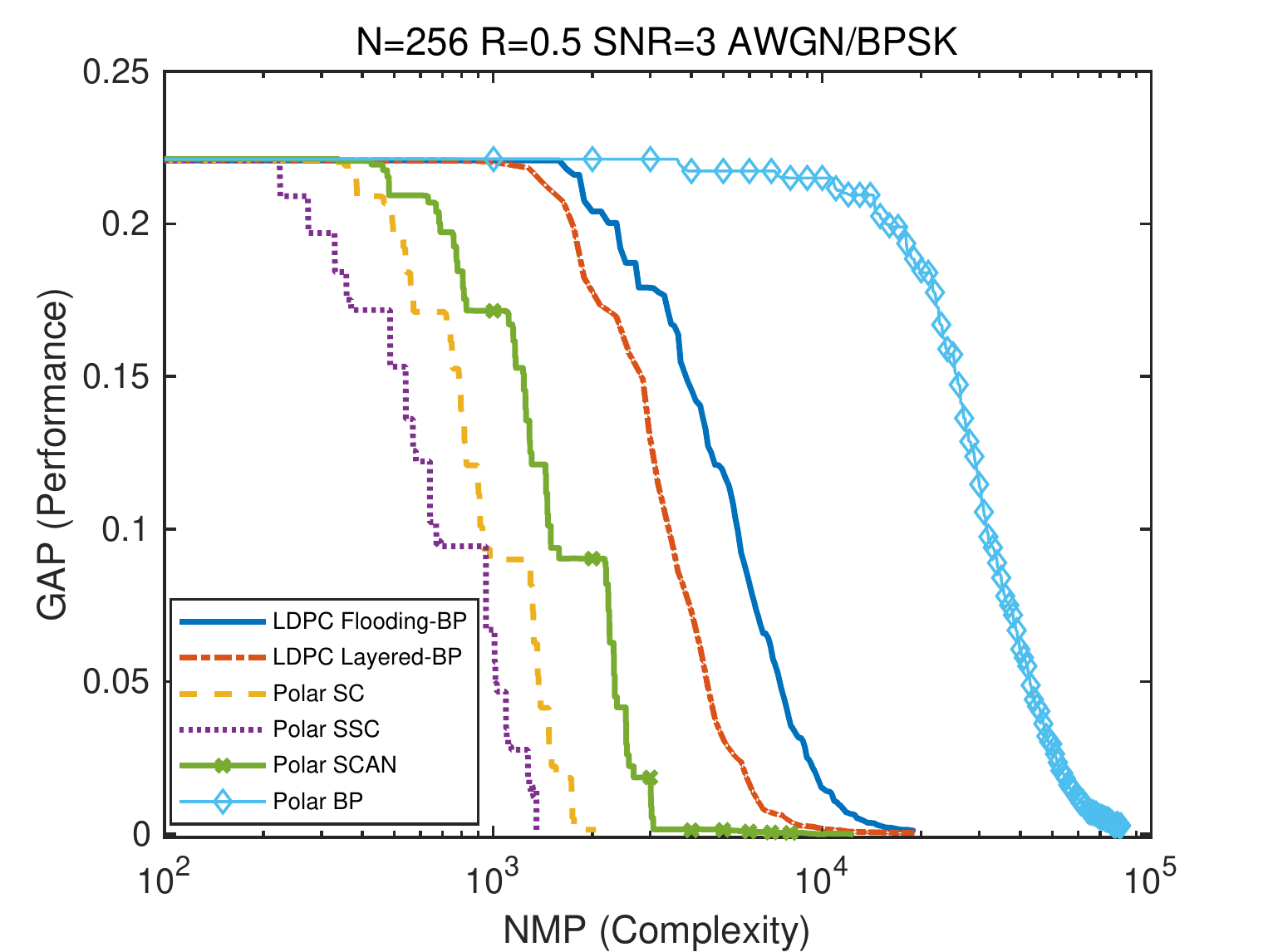}
    \caption{The NMP-GAP curves obtained by simulations. }
    \label{fig3}
\end{figure}

\begin{figure}[htbp]
    \centering
    \includegraphics[width=0.7\textwidth]{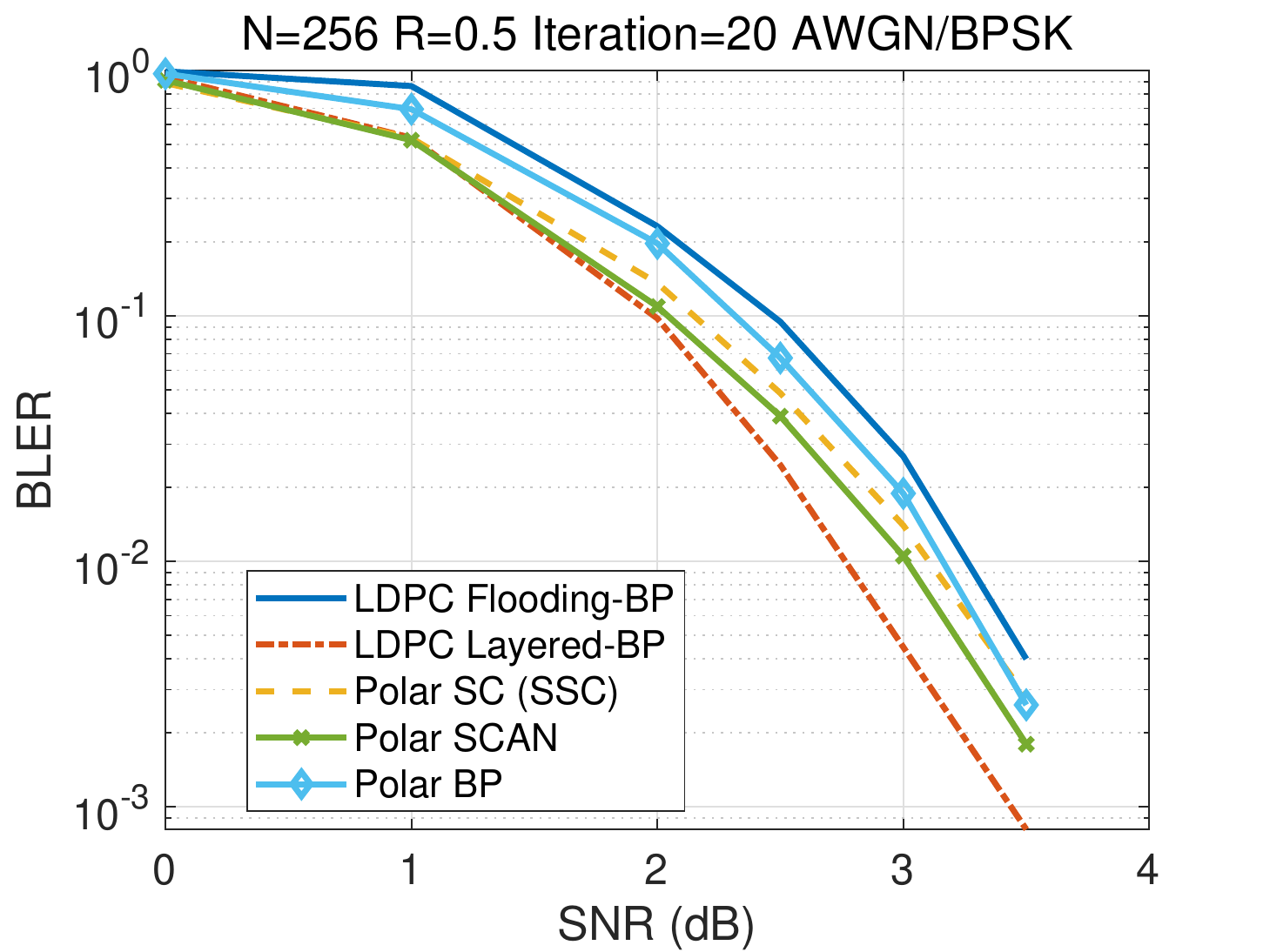}
    \caption{The SNR-BLER curves obtained by simulations. }
    \label{fig10}
\end{figure}

Next, we show that the asymptotic complexity is {occasionally} too coarse to compare decoding algorithms fairly. Given a polar code with length $N$, the asymptotic complexities of the SC, SCAN and polar BP algorithm are $O(N\log {N})$ \cite{6804940, 0Polar}, and that of SSC is $O(N\log {\log {N}})$ \cite{6804939,8010821,9174141}, while the complexities of BP algorithms for LDPC codes are generally considered $O(N)$ {per iteration}. Fig.~\ref{fig5} shows the NMP required for different code lengths. The NMP required in the SC decoder is $N \log N$, and SSC required is less (the actual number depends on the code used). SCAN and polar BP decoder's NMP are $2 \times iter \times N\log N$, where $iter$ is the number of iterations. The NMP in layered-BP \cite{4373433} and flooding-BP decoder is $2d \times iter \times N$, where $d$ is the average degree of variable nodes. In particular, we note that polar BP and SC algorithms have the same asymptotic complexity, but their difference in terms of NMP can be two orders of magnitude. This {analysis} reveals that, for polar codes, BP decoding requires much more messages passed than SC decoding to achieve the same decoding effect and should be considered less computationally efficient. From Fig.~\ref{fig3}, we observe the decoding efficiency of layered BP is higher than flooding BP, as expected. These observations are consistent with the implementation cost in practice but cannot be revealed by asymptotic complexity. The advantage of NMP over asymptotic complexity is that the former provides a higher resolution in reflecting the actual decoding complexity.

The difference between SC and SSC algorithms is also worth {discussing}. In this study, we consider four types of fast-decodable nodes, rate-0 (R0), rate-1 (R1), single parity check (SPC), and repetition (REP) \cite{6065237}. When the decoder traverses to an R0 or R1 node, we only need to make some hard decisions. For an SPC node of length $M$, we need $M-1$ times of comparison; for an REP node, we need $M-1$ times of addition. We count the NMP of an SPC/REP node as $M-1$. The efficiency of SSC has been captured by the ``NMP-GAP'' curves in Fig.~\ref{fig3}. 

%In addition, the SC list (SCL) algorithm, as a commonly used decoding algorithm for polar codes, has not been taken into consideration. Since the SCL algorithm aims at the phenomenon of insufficient polarization, but the hard decision is always assumed to be correct in our theoretical analysis of $U(\bm{c},t)$.

%Also, we can use NMP-Bethe free energy to explain the simplification effect of SSC. In this study, we use
\subsection{Slope}
\begin{figure}[htbp]
    \centering
    \includegraphics[width=0.4\textwidth]{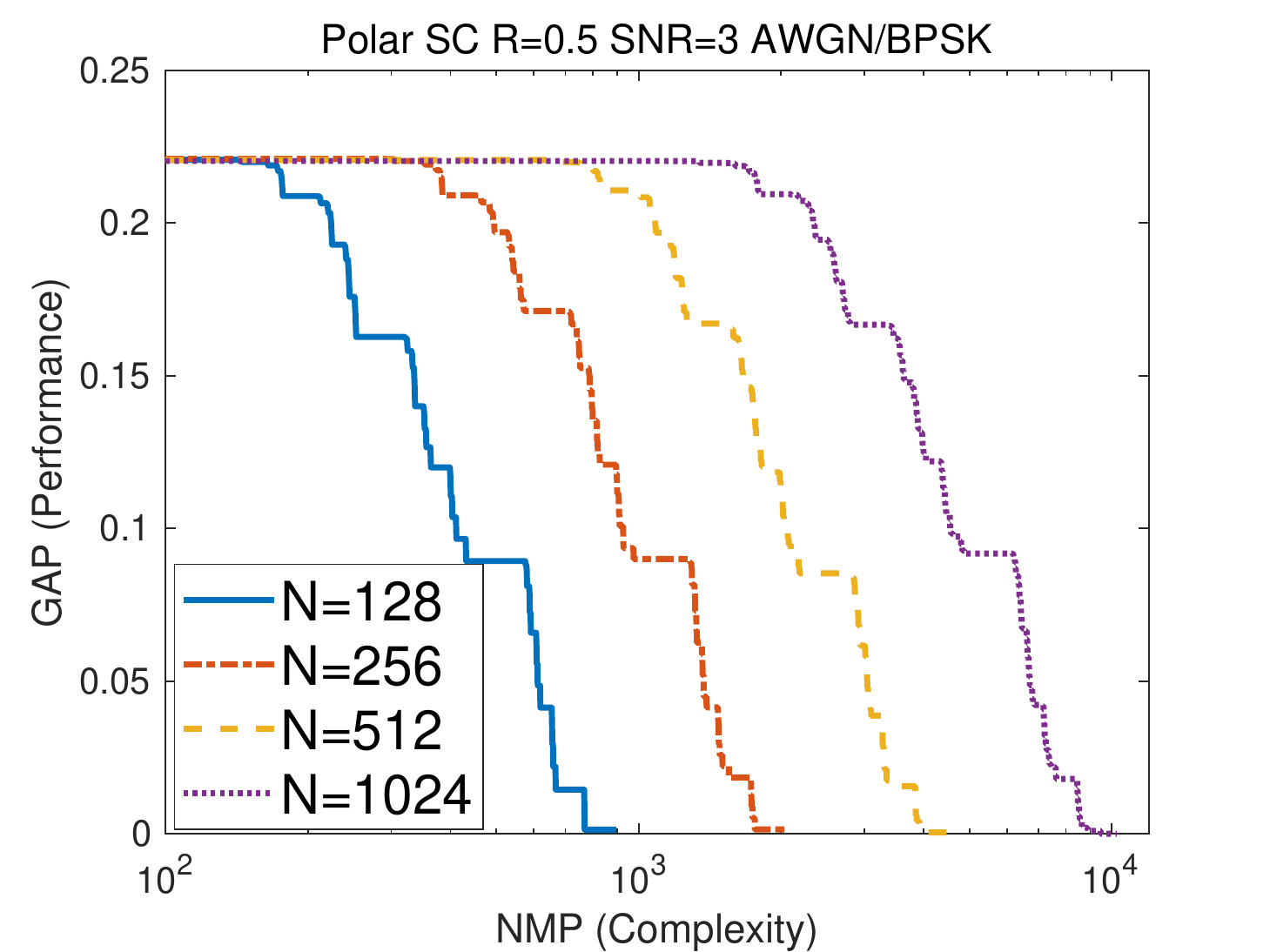}
    \includegraphics[width=0.4\textwidth]{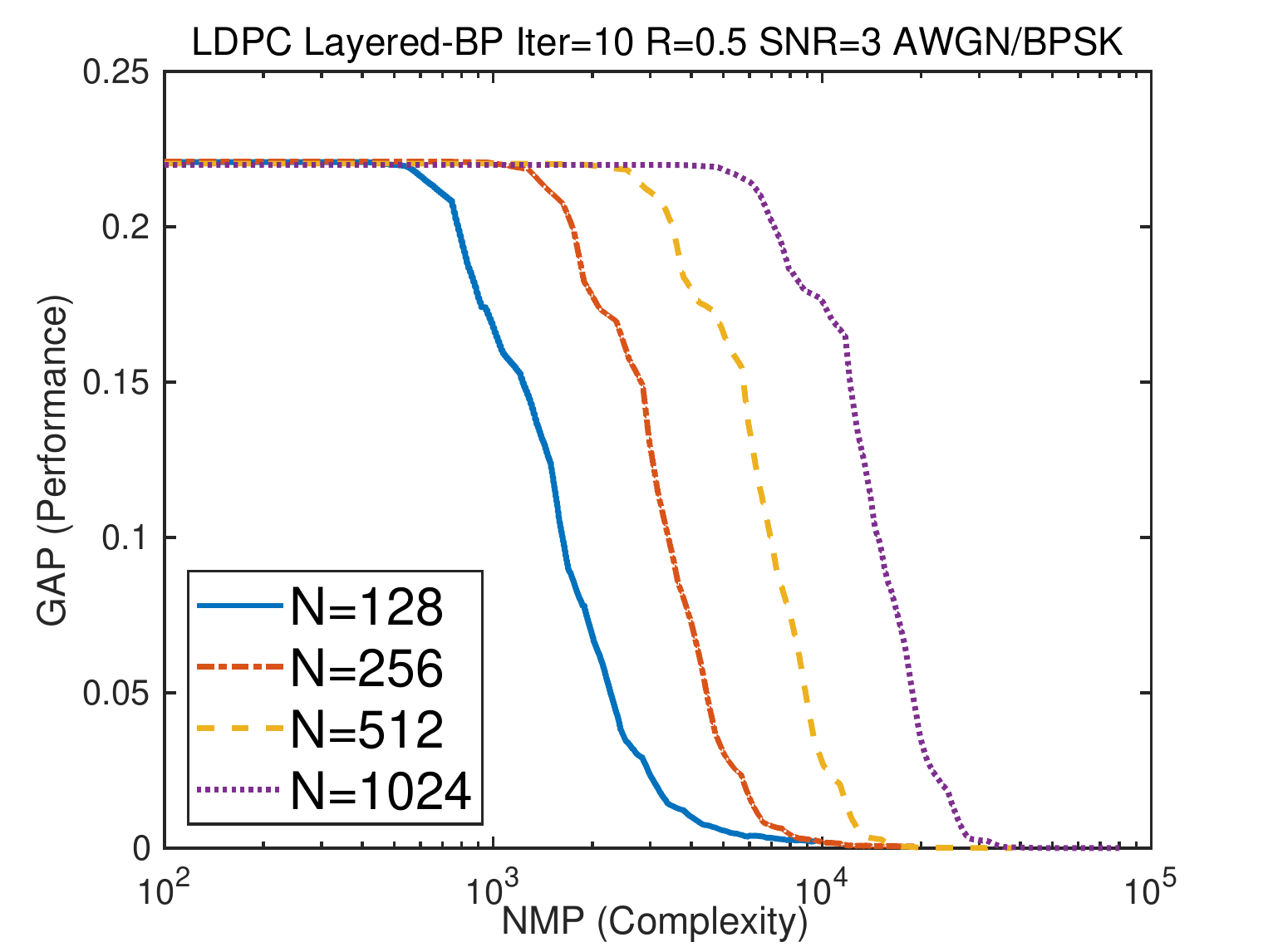}
    \includegraphics[width=0.4\textwidth]{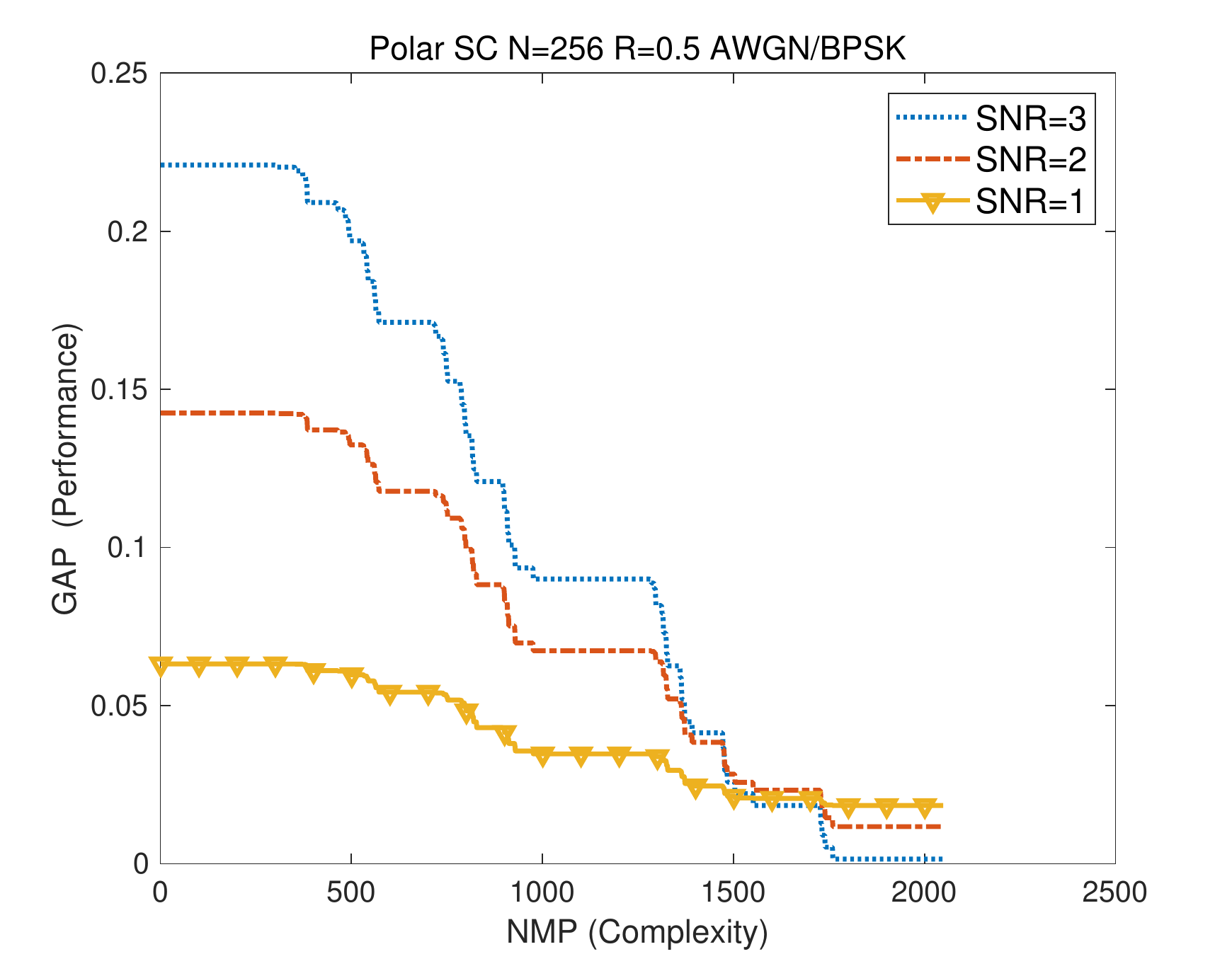}
    \includegraphics[width=0.4\textwidth]{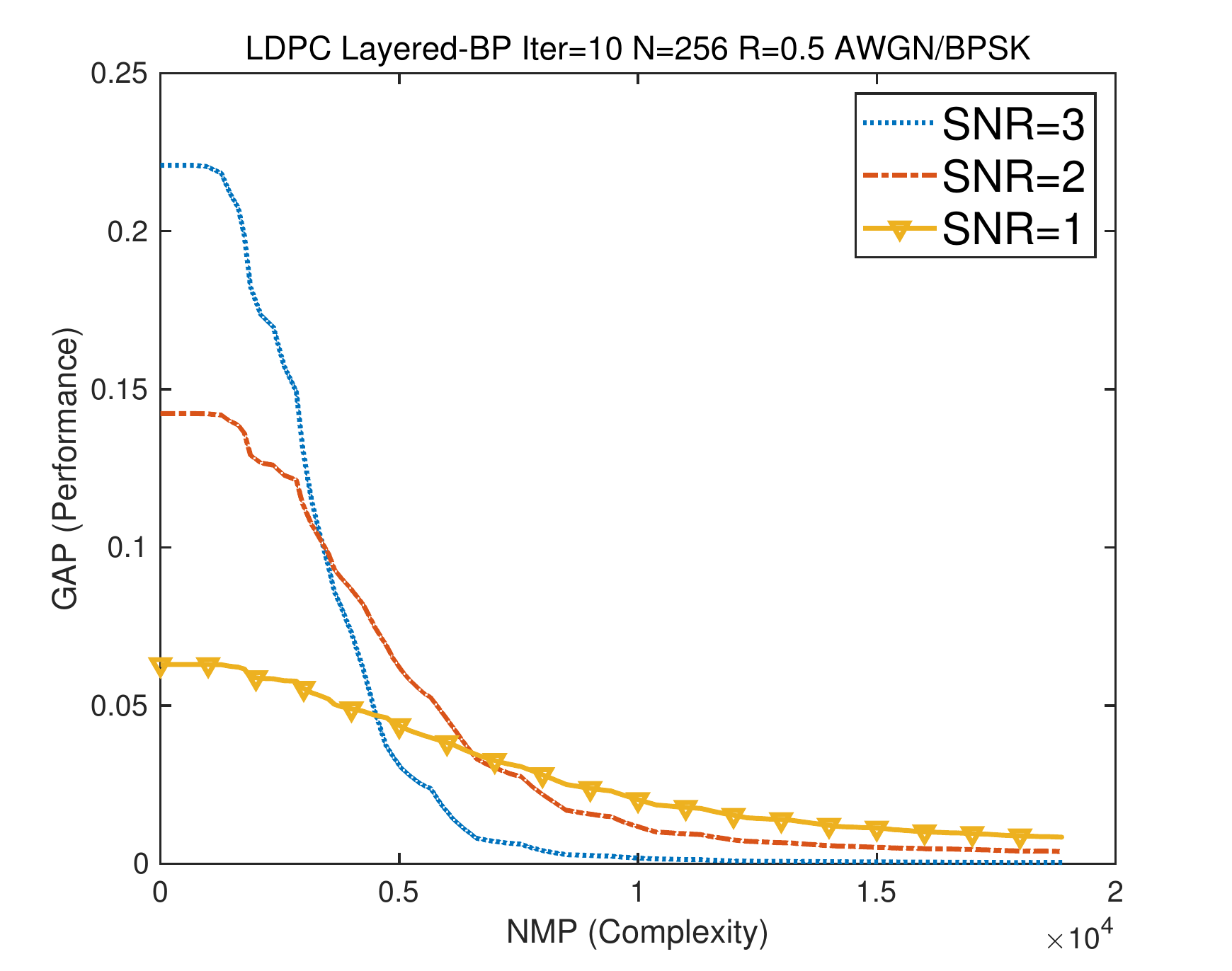}
    \caption{The NMP-GAP curves with different code lengths and different SNRs. }
    \label{fig5}
\end{figure}
The slope of the ``NMP-GAP'' curve is another {indicator of} the decoding efficiency. It is related to decoding algorithm, code length, code rate, and SNR. In Fig.~\ref{fig5}, we find that LDPC codes of shorter lengths have a smaller slope. To some extent, it reflects the inefficiency of such codes. The simulation results also show that the BP algorithm is more efficient at higher SNR because it converges with fewer iterations. {These observations may have been founded in previous studies, but they also demonstrate that our metrics well align with existing knowledge.}

The complexity of polar code does not change with the SNR. The slope of the SC algorithm can be characterized by Theorem~\ref{thm100} and Corollary~\ref{prop5}, which is well verified by the simulation results in Fig.~\ref{fig5}. We can find that the curve {associated with higher SNR} has a larger initial value and is closer to 0 when convergences. %

\subsection{Analytical method}

Alternative to the simulation-based approach, the change of $U(\bm{c},t)$ can also be evaluated analytically by Gaussian approximation \cite{910580}. We initialize the input of decoder by $m_0$ which is normally distributed with mean value $u_0=\frac{2}{\sigma^2}$ and variance $2u_0$, where $\sigma^2$ is the variance of AWGN channel. The Gaussian approximation assumes that the V2C message $m_{i \rightarrow \alpha}$ or C2V message $m_{\alpha \rightarrow i}$ during the decoding process are normally distributed with the mean value $u_{i \rightarrow \alpha}$ and $u_{\alpha \rightarrow i}$, respectively. The update rule for V2C $L$-density's mean value is

\begin{equation}
u_{i \rightarrow \alpha}=u_{0}+\sum_{h \in N(i)\backslash \alpha} u_{h \rightarrow i},
\end{equation}
and the update rule for C2V edge is
\begin{equation}
    \begin{split}
    u_{\alpha \rightarrow i}= \varphi^{-1}\left\{1-\prod_{j \in N(\alpha)\backslash i}\left[1-\varphi\left(u_{j \rightarrow \alpha}\right)\right]\right\},
    \end{split}
\end{equation}
%2 \tanh ^{-1}\left(\prod_{j \in N(\alpha)\backslash i} \tanh \left(m_{j \rightarrow \alpha} / 2\right)\right)
where
\begin{equation}
\varphi(x)=
         \left\{
          \begin{array}{ll}
1-\frac{1}{\sqrt{4 \pi x}} \int_{\mathbb{R}}^{ } \tanh (\frac{u}{2}) e^{-\frac{(u-x)^{2}}{4 x}} d u, & x>0, \\
1, & x=0.
          \end{array}
         \right.
         \end{equation}

\begin{figure}[htbp]
    \centering
    \includegraphics[width=0.4\textwidth]{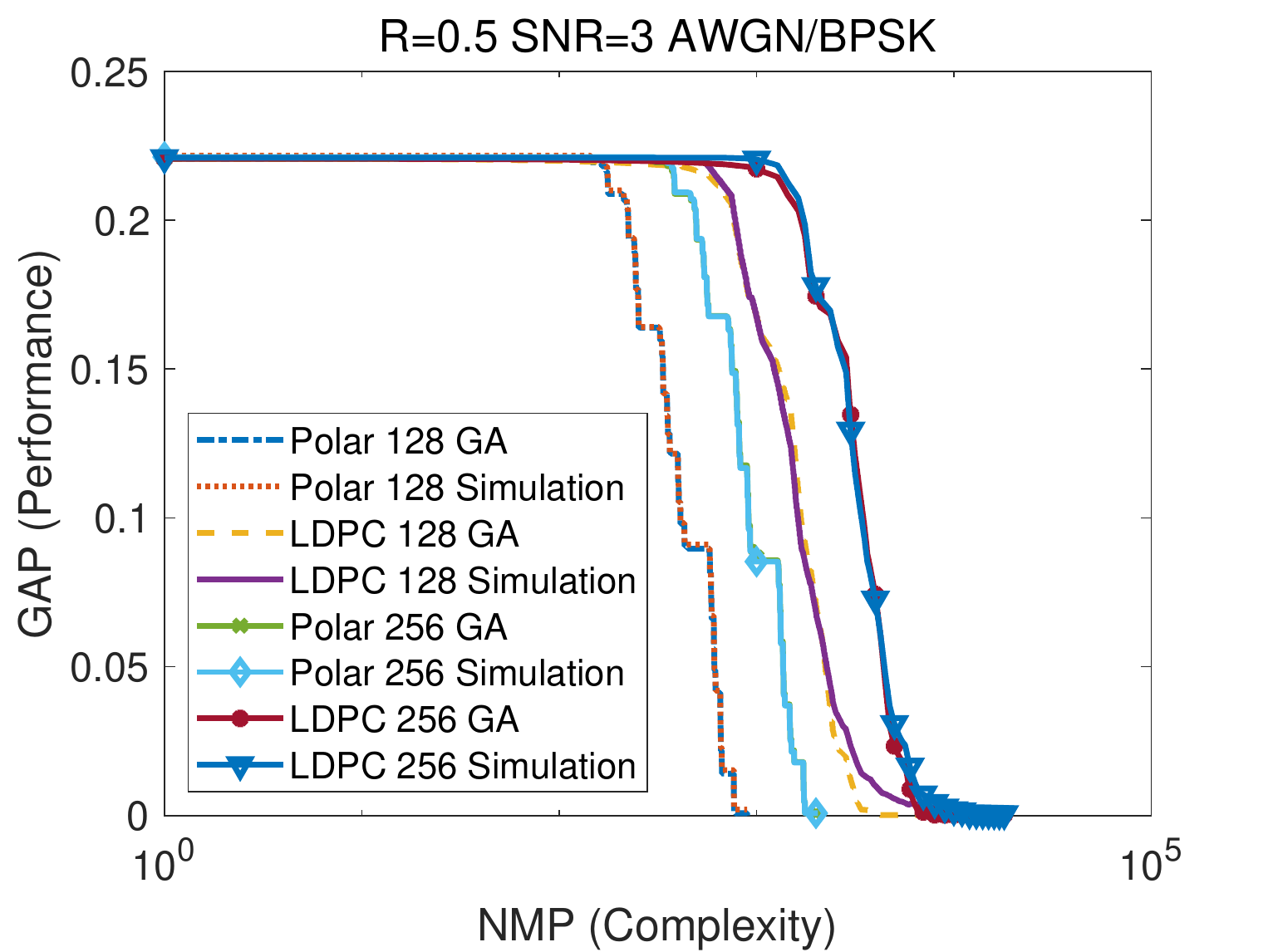}
    \caption{Gaussian approximation and simulation results.}
    \label{fig7}
\end{figure}
We approximate the ``NMP-GAP'' curve by the above updating rules. Fig.~\ref{fig7} shows that there are only very minor deviations from the simulation results. We can employ analytical approaches to speed up the evaluations significantly {which will be used in the next section}.

{
\section{application}
{ In this section, we further illustrate the effectiveness of GAP in measuring efficiency through the application of GAP in the scheduling policy of LDPC codes. 

As discussed in the remark of Definition \ref{def2}, the scheduling policy $S$ can be regarded as a variable of GAP. For a given factor graph $G$, channel with $L$-density $\bm{c}$ and limited complexity $T$, we can formulate the scheduling problem as the following optimization problem,

$$
\min\quad  \tau(\boldsymbol{c},\boldsymbol{S})= \sum_{t=1}^{T}{ U_{\boldsymbol{S}}(\boldsymbol{c},t)}. {} \label{eqn - lp}\\
$$

Based on this optimization problem, we develop a successive-searching BP (SSBP) algorithm to search for a scheduling policy that results in a rapid reduction of the GAP function. This algorithm is based on a local search approach, and more detailed information about SSBP is available in \cite{chang2023optimization}. A comparison of SSBP with the heuristic least-punctured highest-degree (LPHD) scheduling  \cite{9082643} and layered BP is presented in Fig. \ref{figsheduling}. 

\begin{figure}[htbp]
    \centering
    \includegraphics[width=0.8\textwidth]{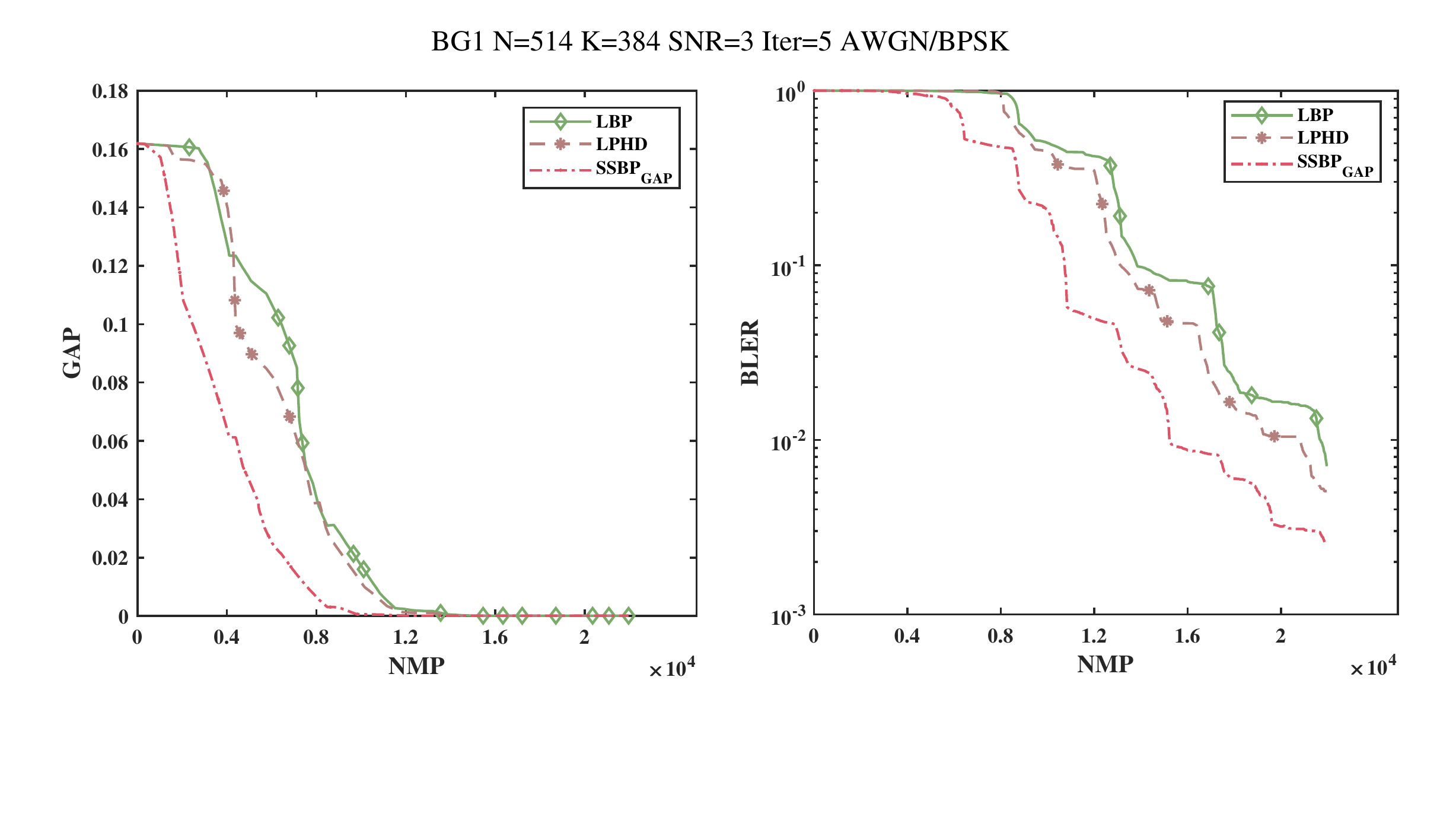}
    \caption{Comparison among scheduling policies in NMP-GAP and NMP-BLER curves.}
    \label{figsheduling}
\end{figure}

Our findings indicate that a decoding algorithm with a faster GAP decrease rate also experiences a faster BLER decrease, suggesting that GAP is a reliable indicator of decoding efficiency. Additionally, our results suggest that GAP may serve as a useful guide for improving decoding efficiency.
}

}

\section{Limitations and future work}
{
There are several weaknesses in the NMP-GAP model that require further study. 

Firstly, GAP as a measure of performance is not applicable to all coding schemes. Although $-\frac{H(X\mid Y)}{N} \rightarrow 0$ for sufficiently large $N$, it is a negative value in the finite length. {Hence when MAP decoding is not error-free, the lower limit of GAP is negative. This means that strictly speaking, GAP is not a distance measure, and a lower GAP does not necessarily imply better performance. Therefore, when applying the NMP-GAP framework to the finite-length regime, the scope should be limited to codes with reasonably good error correction performance (and consequently with the lower limit of GAP close to zero). In these scenarios, GAP is a better performance metric for comparing efficiency than BER and BLER.}

%Hence when reliable communications are not guaranteed, the lower {limit} of GAP is negative. This means GAP is not a distance measure, and a lower GAP does not necessarily mean better performance. Therefore, when applying the NMP-GAP model to the finite-length regime, the scope {should be} limited to codes with {reasonably good} error correction performance. In these scenarios, GAP is a better performance metric for comparing efficiency than BER and BLER.

Secondly, NMP in a factor graph as a measure of complexity is also not applicable to all decoding algorithms. For example, it does not apply to several practically important decoders such as SC list (SCL), ordered statistics decoder (OSD). These decoders involve non-message-passing operations, such as path extension, pruning, and sorting, that cannot be analyzed in the message passing framework. Nevertheless, from the perspective of VLSI model, the {amount of} information exchange in a circuit, as a more general form of NMP, is still a fundamental complexity metric. Previous studies by Grover, Blake and Kschischang \cite{6940280,6284015,7083760, 8438511, 8719012, 7862885} showed that regardless of the type of operation used, the key to decoding is still the process of nexus-to-nexus message passing through wires, and the number of messages passed in circuit determines the overall decoding complexity. We may further build upon the message-passing framework to incorporate more decoding algorithms.

Finally, the NMP-GAP model does not provide an optimal message passing scheduling policy in terms of decoding efficiency. In contrast, the spectral efficiency modeled by Shannon's theorem provides insight into the optimal coding scheme. As to decoding efficiency, its theoretical bound for finite length codes still remains unknown. These theoretic results would significantly broaden the scope of information theory research and evaluation methodology for different coding schemes.
}

\section{conclusion}

In this study, we {presented} a model to measure the efficiency of several message-passing decoding algorithms. That is, measuring how many messages {need to be passed} in order to achieve a certain gap from MAP decoding.  Among many well-known decoding algorithms for polar and LDPC codes, we found that SC decoding for polar codes exhibits the best decoding efficiency under the new paradigm, although it is conventionally labeled with moderate performance. We adopted the ``NMP'' to give an implementation-independent complexity measurement and the ``GAP'' to describe the decoding process in a high resolution. The ``NMP-GAP'' method provides insights into the design and evaluation of several decoding algorithms. It prompts us to rethink coding schemes from the “decoding efficiency” perspective, in addition to the “decoding performance” perspective.

% We adopt the ``NMP'' to give a complexity measurement without resorting to implementation, and the ``GAP'' to describe the decoding process in a high resolution. The ``NMP-GAP'' method provides insights into the design and evaluation of several decoding algorithms and prompts us to rethink coding schemes from the ``decoding efficiency'' perspective, in addition to the ``decoding performance'' perspective.

% trigger a \newpage just before the given reference
% number - used to balance the columns on the last page
% adjust value as needed - may need to be readjusted if
% the document is modified later
%\IEEEtriggeratref{8}
% The "triggered" command can be changed if desired:
%\IEEEtriggercmd{\enlargethispage{-5in}}

% references section

% can use a bibliography generated by BibTeX as a .bbl file
% BibTeX documentation can be easily obtained at:
% http://mirror.ctan.org/biblio/bibtex/contrib/doc/
% The IEEEtran BibTeX style support page is at:
% http://www.michaelshell.org/tex/ieeetran/bibtex/
%\bibliographystyle{IEEEtran}
% argument is your BibTeX string definitions and bibliography database(s)
%\bibliography{IEEEabrv,../bib/paper}
%
% <OR> manually copy in the resultant .bbl file
% set second argument of \begin to the number of references
% (used to reserve space for the reference number labels box)

\bibliographystyle{IEEEtran}
\bibliography{study}
% biography section
%
% If you have an EPS/PDF photo (graphicx package needed) extra braces are
% needed around the contents of the optional argument to biography to prevent
% the LaTeX parser from getting confused when it sees the complicated
% \includegraphics command within an optional argument. (You could create
% your own custom macro containing the \includegraphics command to make things
% simpler here.)
%\begin{IEEEbiography}[{\includegraphics[width=1in,height=1.25in,clip,keepaspectratio]{mshell}}]{Michael Shell}
% or if you just want to reserve a space for a photo:
\iffalse
\begin{IEEEbiography}{M}
Biography text here.
\end{IEEEbiography}

% if you will not have a photo at all:
\begin{IEEEbiographynophoto}{J}
Biography text here.
\end{IEEEbiographynophoto}

% insert where needed to balance the two columns on the last page with
% biographies
%\newpage

\begin{IEEEbiographynophoto}{J}
Biography text here.
\end{IEEEbiographynophoto}

% You can push biographies down or up by placing
% a \vfill before or after them. The appropriate
% use of \vfill depends on what kind of text is
% on the last page and whether or not the columns
% are being equalized.

%\vfill

% Can be used to pull up biographies so that the bottom of the last one
% is flush with the other column.
%\enlargethispage{-5in}

\fi

% that's all folks
\end{document}